\newtheoremstyle{corsivo}
   {\medskipamount}{\medskipamount}%
   {\itshape}{}%
   {\bfseries}{}%
   { }
   {\thmname{#1}\thmnumber{\@ifnotempty{#1}{ }\@upn{#2}}%
    \thmnote{ {\bfseries(#3)}}.}%
\theoremstyle{corsivo}
\newtheorem{thm}{Theorem}[section]
\newtheorem{lemma}[thm]{Lemma}
\newtheorem{crl}[thm]{Corollary}
\newtheorem{prop}[thm]{Proposition}
\numberwithin{equation}{section}
\newtheoremstyle{dritto}
   {\medskipamount}{\medskipamount}%
   {\rmfamily}{}%
   {\bfseries}{}%
   { }
   {\thmname{#1}\thmnumber{\@ifnotempty{#1}{ }\@upn{#2}}%
    \thmnote{ {\bfseries(#3)}}.}%
\theoremstyle{dritto}
\newtheorem{rmk}[thm]{Remark}
\newtheorem{dfn}[thm]{Definition}
\newtheorem{assumption}[thm]{Assumption}
\newcommand{\Id}{\mathds{1}}  
\newcommand{\id}{\mathbb{I}}   
\newcommand{\B}{\mathbb{B}}
\newcommand{\Bred}{\mathbb{B}\sub{eff}}  
\newcommand{\C}{\mathbb{C}}
\newcommand{\R}{\mathbb{R}}
\newcommand{\Z}{\mathbb{Z}}
\newcommand{\N}{\mathbb{N}}
\newcommand{\T}{\mathbb{T}}
\newcommand{\PB}{\mathcal{P}}
\newcommand{\Hi}{\mathcal{H}}
\newcommand{\U}{\mathcal{U}}
\newcommand{\BH}{\mathcal{B}(\mathcal{H})}
\newcommand{\scal}[2]{\left\langle #1, #2 \right\rangle}
\newcommand{\bra}[1]{\left\langle #1 \right|}
\newcommand{\ket}[1]{\left| #1 \right\rangle}
\newcommand{\eu}{\mathrm{e}}
\newcommand{\iu}{\mathrm{i}}
\newcommand{\di}{\mathrm{d}}
\newcommand{\act}{\triangleleft}  
\newcommand{\sub}[1]{_{\mathrm{#1}}}
\newcommand{\Tr}{^{\mathsf{T}}}  
\newcommand{\inner}[2]{\left\langle #1, #2 \right\rangle}
\newcommand{\norm}[1]{\left\| #1 \right\|}
\newcommand{\virg}[1]{``#1\,''}
\newcommand{\half}{\mbox{\footnotesize $\frac{1}{2}$}}
\newcommand{\cell}[1]{{#1}_{\rm eff}}
\newcommand{\uno}{_{\rm one}}
\newcommand{\due}{_{\rm two}}
\newcommand{\tre}{_{\rm three}}
\newcommand{\quat}{_{\rm four}}
\newcommand{\cinque}{_{\rm five}}
\newcommand{\la}{\lambda}  
\newcommand{\La}{\Lambda}  
\newcommand{\epsi}{\varepsilon}
\newcommand{\UP}{U_{\Phi, \Psi}}
\newcommand{\midpoint}[2]{\wideparen{M} \left(  #1, #2 \right)}   
\def\({\left(}
\def\){\right)}
\DeclareMathOperator{\tr}{tr}
\DeclareMathOperator{\Ran}{Ran}
\DeclareMathOperator{\Fr}{Fr}
\DeclareMathOperator{\diag}{diag}
\DeclareMathOperator{\dist}{dist}
\newcommand{\ie}{{\sl i.\,e.\ }}
\newcommand{\eg}{{\sl e.\,g.\ }}
\newcommand{\set}[1]{ \left\{  #1 \right\}}
\let\oldfootnote\footnote
\renewcommand{\footnote}[1]{\oldfootnote{\  #1}}
\title[Construction of real localized Wannier functions]{Construction of  
real-valued localized \\[3mm] composite Wannier functions for insulators}
\author[D. Fiorenza, D. Monaco, G. Panati]{Domenico Fiorenza, Domenico Monaco, Gianluca Panati}
\begin{document}

\begin{abstract} We consider a real periodic Schr\"{o}dinger operator and a 
physically relevant family of  $m \geq 1$  Bloch bands, 
separated by a gap from the rest of the spectrum, and we
 investigate the localization properties of the corresponding 
composite Wannier functions. To this aim, we  show that in dimension
$d \leq 3$ there exists a global frame consisting of smooth quasi-Bloch  
functions which are both periodic and time-reversal symmetric.  
Aiming to applications in computational physics, we provide a constructive algorithm to obtain such a Bloch frame. The construction yields the existence of a basis of composite Wannier functions which are real-valued and almost-exponentially localized. 

The proof of the main result exploits only the fundamental symmetries of the projector on the relevant bands, 
allowing applications, beyond the model specified above, to a broad range of gapped periodic quantum systems with a time-reversal symmetry of bosonic type.

\smallskip

\noindent \textsc{Keywords.} Periodic Schr\"{o}dinger operators, Wannier functions, Bloch frames.
\vspace{-9mm}

\end{abstract}

\maketitle

\tableofcontents

\newpage
\section{Introduction}
\label{Sec:Introduction}

The existence of an orthonormal basis of well-localized  Wannier functions  is a crucial 
issue in solid-state physics \cite{Wannier review}.  Indeed, such a basis is the key tool to obtain effective tight-binding models for a linear or non-linear Schr\"odinger dynamics 
\cite{PelinovskySchneiderMacKay2008, PelinovskySchneider2010, 
Ibanez-Azpiroz_et_al2013a, Ibanez-Azpiroz_et_al2013b, Walters_et_al2013}, it allows computational methods whose cost scales linearly with the size of the confining box \cite{Goedecker},  it is useful in the rigorous analysis of perturbed periodic Hamiltonians 
\cite{CancesLewin2010, LewinSere2010, ELu2011}, and it  is crucial in the modern theory of polarization of crystalline solids  \cite{Resta92, KSV, PanatiSparberTeufel2009} and in the pioneering research on topological insulators \cite{Hastings2008,  RyuSchnyder2010, Prodan2009, Prodan2011b, SoluyanovVanderbilt2011, SoluyanovVanderbilt2011b}.


In the case of a single isolated Bloch band, which does not touch any other Bloch band, the rigorous 
proof of the existence of exponentially localized Wannier functions goes back to the work of W. Kohn 
 \cite{Kohn59}, who provided a proof in dimension $d =1$ for an even potential. The latter assumption was later removed 
 by J. de Cloizeaux  \cite{Cl2},  who also gave a proof valid for any $d > 1$ under the assumption that the periodic 
 potential is centro-symmetric \cite{Cl1, Cl2}.    
The first proof under  generic assumptions, again for any $d >1$, was provided by G. Nenciu \cite{Ne83}, 
and few years later a simpler proof appeared  \cite{HeSj89}.  

In real solids, Bloch bands intersect each other. Therefore, as early suggested \cite{Bl, Cl2}, it is more natural to focus on 
a family of  $m$ Bloch bands which is separated by a gap from the rest of the spectrum, as \eg the family of all the Bloch bands below the Fermi energy in an insulator or a semiconductor. Accordingly, the notion of Bloch function is weakened to that of \emph{quasi-Bloch function} and, correspondingly, one considers \emph{composite Wannier functions} (Definition \ref{Def:CompositeWannierBasis}).  In the multi-band case, the existence of exponentially localized composite Wannier functions is subtle, since it might be topologically obstructed. A proof of existence was provided in \cite{NeNe82, Ne91} for $d =1$, while 
a proof in the case $d \leq 3$ required more abstract bundle-theoretic methods \cite{BrouderPanati2007, Panati2007}, both results being valid for any number of bands $m \in \N$. In the $1$-dimensional case generalizations to non-periodic gapped systems are also possible \cite{NeNe97}, as well as extensions to quasi-$1$-dimensional systems \cite{CoNe}.

 \medskip
 
Beyond the abstract existence results, computational physics strived for an explicit construction.
On the one hand, Marzari and Vanderbilt \cite{MaVa} suggested a shift to a variational viewpoint,  
which is nowadays very popular in computational solid-state physics.  They introduced a suitable 
localization functional, defined on a set of composite Wannier functions, and argued that the 
corresponding  minimizers are expected to be exponentially localized. They also noticed that, for $d=1$, 
the minimizers are indeed exponentially localized in view of the relation between the composite Wannier functions 
and the eigenfunctions of the reduced position operator  \cite{Kievelsen, NeNe97}. 
For $d >1 $, the exponential localization of the minimizers follows instead from deeper properties of the localization 
functional \cite{PanatiPisante}, if $d \leq 3$ and some technical hypotheses are satisfied. 
Moreover, there is numerical evidence that the minimizers are \emph{real-valued}  functions, 
but a mathematical proof of this fact is still missing \cite{MaVa, BrouderPanati2007}.  

On the other hand, researchers are also working to obtain an explicit algorithm yielding composite Wannier 
functions which are both real-valued and well-localized \cite{CoNe2014}. As a predecessor in this direction, we  
mention again the result in \cite{HeSj89}, which provides an explicit proof in the single-band case, \ie for $m=1$, through the construction of time-reversal symmetric Bloch functions (see below for detailed comments).    

In this paper, following the second route, we provide an explicitly constructive algorithm to obtain, for any $d \leq 3$ and $m \in \N$, composite Wannier functions which are {\emph{real-valued}} and \emph{almost-exponentially localized},  in the sense that they decay faster than the inverse of any polynomial (Theorem \ref{Thm:RealLocalizedWannierBasis}). The latter result follows from a more general theorem (Theorem \ref{Thm:SmoothBlochFrames}), which applies to a broad range of gapped periodic quantum systems with a time-reversal symmetry of \emph{bosonic} (or \emph{even}) type (see Assumption \ref{Ass:projectors}).  
Under such an assumption, we explicitly construct a smooth frame of eigenfunctions of the relevant projector (\ie quasi-Bloch functions in the application to Schr\"{o}dinger operators) which are both pseudo-periodic and time-reversal symmetric, in the sense of Definition \ref{dfn:Bloch}. Since the result is proved in a general setting, we foresee possible applications  to periodic Pauli or Dirac operators, as well as to tight-binding models as \eg the one proposed by Haldane \cite{Haldane88}.   
Despite the apparent similarity, the case of systems with  \emph{fermionic} (or \emph{odd}) time-reversal symmetry, 
relevant in the context of topological insulators \cite{HasanKane},  
is radically different, as emphasized in  \cite{SoluyanovVanderbilt2012, GrafPorta, FiMoPa}, see Remark \ref{Rem:FermionicSystems}.


We conclude the Introduction with few comments about the relation between our constructive algorithm and  the proofs of some previous results. 
 
\noindent The proof provided by Helffer and Sj\"ostrand for $m=1$  and $d \in \N$ \cite{HeSj89}, 
is explicitly constructive and yields real-valued Wannier functions. However, the proof has not a direct
generalization to the case $m>1$ for a very subtle reason, which is occasionally overlooked even by experts. 
We illustrate the crucial difficulty in the simplest case, by considering a unitary matrix $U(k_1) \in \U(\C^m)$ 
depending continuously on a parameter $k_1 \in \T^1 = \R/2\pi\Z$.  When mimicking the proof in   
\cite{HeSj89}, one defines (\eg via spectral calculus) the unitary $U(k_1)^{k_2}$, for $k_2 \in [0,1/2]$,  which is   
well-defined whenever a determination of the complex logarithm has been chosen in such a way that the 
branch-cut does not touch the (point) spectrum of  $U(k_1)$.  As $k_1 \in \T^1$ varies, the branch-cut must vary accordingly, 
and it might happen that the branch-cut for $k_1 = 2 \pi$ equals the one for $k_1 =0$ after a complete wind (or more) 
in the complex plane. In such an eventuality, the rest of the argument fails.  
In \cite{Ne83}, a similar difficulty appears.\footnote{We cite textually from \cite{Ne83}: \emph{Unfortunately, we have 
been unable to prove that $T(\mathbf{z}^{q-1})$ admits an analytic and periodic logarithm [\ldots], and therefore we shall 
follow a slightly different route.}}   As far as we know,  there is no direct way to circumvent this kind of difficulty. For this reason, in this paper we develop a radically different technique.

\medskip \goodbreak

The paper is organized as follows.  In Section \ref{Sec:Schrodinger} we consider a real periodic Schr\"odinger operator and we show that, for a gapped system as \eg an insulator, the orthogonal projector on the Bloch states up to the gap  satisfies some natural properties (Proposition \ref{Prop P properties}).  Generalizing from the specific example, the abstract version of these properties becomes our starting point, namely Assumption \ref{Ass:projectors}. In Section \ref{Sec:results} we state 
our main results, and we briefly comment on the structure of the proof, which is the content of Section \ref{Sec:Algorithm}. Finally, a technical result concerning the smoothing of a continuous symmetric Bloch frame to obtain a smooth symmetric Bloch frame, which holds true in any dimension and might be of independent interest, is provided in Section \ref{Sec:SmoothingProcedure}. 


\bigskip

\textbf{Acknowledgments.}  We are indebted to A. Pisante for many useful comments. G.P. is  grateful to H. Cornean and G. Nenciu for useful discussions,  and to H.~Spohn and S. Teufel for stimulating his interest in this problem during the preparation of \cite{PST2003}.  We are  grateful to the {\it  Institut Henri Poincar\'e} for the kind hospitality in the framework of the  trimester \virg{Variational and Spectral Methods in Quantum Mechanics}, organized by M.\,J.\ Esteban and M.\ Lewin. 

\noindent This project was supported by the National Group for Mathematical Physics (INdAM-GNFM)  and from MIUR (Project PRIN 2012).

\newpage

\section{From Schr\"{o}dinger operators to covariant families of projectors}
\label{Sec:Schrodinger}

\newcommand{\UZ}{\U\sub{BF}}
\newcommand{\Hf}{\Hi\sub{f}}

The dynamics of a particle in a crystalline solid can be modeled by use of a \emph{periodic Schr\"{o}dinger operator}
\[ H_{\Gamma} = - \Delta + V_{\Gamma} \qquad \text{acting in } L^2(\R^d), \]
where the potential $V_\Gamma$ is periodic with respect to a lattice (called the \emph{Bravais lattice} in the physics literature)
\[ \Gamma := \mathrm{Span}_\Z\set{a_1, \ldots, a_d}\simeq \Z^d \subset \R^d, \qquad \text{ with }\set{a_1, \ldots, a_d} \text{ a basis in } \R^d. \]
Assuming that
\begin{equation} \label{Katosmallness}
V_{\Gamma} \in L_{\rm loc}^2(\R^d) \mbox{ for } d \leq 3, \qquad \text{or} \qquad V_{\Gamma} \in L_{\rm loc}^p(\R^d) \mbox{ with } p > d/2 \mbox{ for } d \geq 4, 
\end{equation}
the operator $H_\Gamma$ is self-adjoint on the domain $H^2(\R^d)$ \cite[Theorem XIII.96]{RS4}.

In order to simplify the analysis of such operators, one looks for a convenient representation which (partially) diagonalizes simultaneously both the Hamiltonian and the lattice translations. This is provided by the (\emph{modified}) \emph{Bloch-Floquet transform}, defined on suitable functions $w \in  C_0(\R^d) \subset L^2(\R^d)$ as
\begin{equation} \label{Zak transform}
( \UZ \, w)(k,y):= \frac{1}{|\B|^{1/2}} \,\, \sum_{\gamma\in\Gamma} \eu^{-\iu k \cdot (y +\gamma)} \, w( y + \gamma), \qquad y \in \R^d, \, k \in\R^{d}.
\end{equation}
Here $\B$ is the fundamental unit cell for the dual lattice $\Gamma^* := \mathrm{Span}_\Z\set{b_1, \ldots, b_d} \subset \R^d$, determined by the basis $\set{b_1, \ldots, b_d}$ which satisfies $b_i \cdot a_j = 2 \pi \delta_{ij}$, namely
\[ \B := \set{k = \sum_{j=1}^{d} k_j b_j \in \R^d: -\half  \le k_j \le \half }.  \]
From \eqref{Zak transform}, one immediately reads the (pseudo-)periodicity properties
\begin{equation}\label{Zak properties}
\begin{aligned}
\big( \UZ \, w\big) (k, y+\gamma) &= \big( \UZ\, w\big)(k,y) && \mbox{for all } \gamma \in\Gamma\,,\\
\big( \UZ \, w\big) (k + \lambda, y) &= \eu^{-\iu \la \cdot y}\,\big( \UZ \, w \big) (k,y) && \mbox{for all }\lambda\in\Gamma^*\,.
\end{aligned}
\end{equation}
The function $\big( \UZ \, w\big) (k, \cdot)$, for fixed $k \in \R^d$, is thus periodic, so it can be interpreted as an element in the Hilbert space $\Hf := L^2(\T^d_Y)$, where $\T^d_Y = \R^d / \Gamma$ is the torus obtained by identifying opposite faces of the fundamental unit cell for $\Gamma$, given by
\[ Y := \set{y = \sum_{j=1}^{d} y_j a_j \in \R^d: -\half \le y_j \le \half }. \]

Following \cite{PST2003}, we reinterpret \eqref{Zak properties} in order to emphasize the role of covariance with respect to the action of the relevant symmetry group. Setting
\[ \big(\tau(\lambda)\psi \big)(y) := \eu^{- \iu \lambda \cdot \,y} \psi(y), \qquad \text{for } \psi \in \Hf, \]
one obtains a unitary representation $\tau \,\colon\, \Gamma^* \to \U(\Hf)$ of the group of translations by vectors of the dual lattice.
One can then argue that $\UZ$ establishes a unitary transformation $\UZ : L^2(\R^d) \to \Hi_\tau$, where $\Hi_\tau$ is the Hilbert space
$$
\Hi_\tau :=\Big\{ \phi \in L^2_{\rm loc}(\R^d, \Hf ):\,\, \phi(k + \lambda) = \tau(\lambda)\,\phi(k) \; \forall \lambda \in \Gamma^{*}, \mbox{ for a.e. } k \in \R^d \Big\}\, 
$$
equipped with the inner product 
$$
\inner{\phi}{\psi}_{\Hi_{\tau}} =  \int_{\B} \inner{\phi(k)}{\psi(k)}_{\Hf} \, \di k.
$$
Moreover, the inverse transformation $\UZ^{-1} : \Hi_\tau \to L^2(\R^d)$ is explicitely given by
\[ \left( \UZ^{-1} \phi \right)(x) =  \frac{1}{|\B|^{1/2}} \int_{\B} \di k \,\eu^{ \iu k \cdot x} \phi(k, x). \]

In view of the identification
\[ \Hi_\tau \simeq \int_{\B}^{\oplus} \di k \: \Hf, \]
we see that  the Schr\"{o}dinger operator $H_\Gamma$ becomes a fibered operator in the Bloch-Floquet representation, \ie
\[ \UZ \, H_{\Gamma} \, \UZ^{-1} = \int_{\B}^\oplus \di k \,H(k), \qquad \text{where} \qquad H(k) = \big( -\iu \nabla_y + k\big)^2 + V_\Gamma(y). \]
The fiber operator $H(k)$, $k \in \R^d$, acts on the $k$-independent domain $H^2(\T_Y^d) \subset \Hf$, where it defines a self-adjoint operator. Moreover, it has compact resolvent, and thus pure point spectrum. We label its eigenvalues, accumulating at infinity, in increasing order, as $E_0(k) \le E_1(k) \le \cdots \le E_n(k) \le E_{n+1}(k) \le \cdots$, counting multiplicities. The functions $\R^d \ni k \mapsto E_n(k) \in \R$ are called \emph{Bloch bands}. Since the fiber operator $H(k)$ is \emph{$\tau$-covariant}, in the sense that 
\[ H(k + \lambda) = \tau(\lambda)^{-1} \, H(k) \, \tau(\lambda), \qquad \lambda \in \Gamma^*, \]
Bloch bands are actually periodic functions of $k \in \R^d$, \ie $E_n(k+\lambda) = E_n(k)$ for all $\lambda \in \Gamma^*$, and hence are determined by the values attained at points $k \in \B$.

A solution $u_n(k)$ to the eigenvalue problem
\[ H(k) u_n(k) = E_n(k) u_n(k), \qquad u_n(k) \in \Hf, \qquad \norm{u_n(k)}_{\Hf} = 1, \]
constitutes the (periodic part of the) \emph{$n$-th Bloch function}, in the physics terminology. Assuming that, for fixed 
$n \in \N$, the eigenvalue $E_n(k)$ is non-degenerate for all $k \in \R^d$, the function $u_n: y \mapsto u_n(k,y)$ is 
determined up to the choice of a $k$-dependent phase, called the \emph{Bloch gauge}.

\goodbreak

By definition, the \textbf{Wannier function}  $w_n$ corresponding to the Bloch function $u_n \in \Hi_{\tau}$ is the preimage, via Bloch-Floquet transform, of the Bloch function, \ie
\begin{equation} \label{Wannier}
w_n(x) := \left( \UZ^{-1} u_n \right)(x) = \frac{1}{|\B|^{1/2}} \int_{\B} \di k \,\eu^{ \iu k \cdot x} u_n(k, x).
\end{equation}
Localization (\ie decay at infinity) of the Wannier function $w = w_n$ and smoothness of the associated Bloch function $u = u_n$ are related by the following statement, that can be checked easily from the definition \eqref{Zak transform} of the Bloch-Floquet transform (see \cite[Sec. 2]{PanatiPisante} for details):
\begin{equation} \label{Zak equivalence}
\begin{aligned}
w \in H^s(\R^d), \ s \in \N & \Longleftrightarrow u \in L^2(\B, H^s(\T_Y^d)), \\
\langle x \rangle^r w \in L^2(\R^d), \ r \in \N & \Longleftrightarrow u \in \Hi_\tau \cap H^r_{\rm loc}(\R^d, \Hf),\\
\end{aligned}
\end{equation}
where we used the Japanese bracket notation $\langle x \rangle = (1 + |x^2|)^{1/2}$. A Wannier function such that $\langle x \rangle^r w \in L^2(\R^d)$ for all $r \in \N$ will be called \textbf{almost-exponentially localized}.


As mentioned in the Introduction, to deal with real solids, where generically the Bloch bands intersect each other,  a multi-band theory becomes necessary.   Many of the above statements can be formulated even in the case when more than one Bloch band is considered. Let $\sigma_*(k)$ be the set $\set{E_{i}(k): n \leq i \leq n+m-1}$, $k \in \B$, corresponding to a family of $m$ Bloch bands. Usually, in the applications,  $\sigma_*(k)$ consists of some Bloch bands which are physically relevant, as \eg the bands below the Fermi energy in insulators and semiconductors. Assume the following \emph{gap condition}:
\begin{equation}\label{Gap condition}
\inf_{k \in \B} \mathrm{dist}\big( \sigma_*(k), \sigma(H(k)) \setminus \sigma_*(k) \big) > 0.
\end{equation}
The relevant object to consider in this case is then the \emph{spectral projector} $P_*(k)$ on the set $\sigma_*(k)$, which in the physics literature  reads 
$$
P_*(k) = \sum_{n \in \mathcal{I}_*} \ket{u_n(k)}\bra{u_n(k)}, 
$$
where the sum runs over all the bands in the relevant family, \ie over the set $ \mathcal{I}_* = \set{n \in \N :E_n(k) \in \sigma_*(k)}$.
As proved in \cite[Prop.\ 2.1]{PanatiPisante},  elaborating on a longstanding tradition of related results \cite{RS4, Ne91},  
the projector $P_*(k)$   satisfies the properties listed in the following Proposition.

\begin{prop} \label{Prop P properties}
Let $P_*(k) \in \mathcal{B}(\Hf)$ be the spectral projector of $H(k)$ corresponding to the set $\sigma_*(k) \subset \R$. Assume that $\sigma_{*}$ satisfies the gap condition \eqref{Gap condition}. Then the family $\set{P_*(k)}_{k \in \R^d}$ has the following properties:
\begin{enumerate}[label=$(\mathrm{p}_{\arabic*})$, ref=$(\mathrm{p}_{\arabic*})$]
\item the map $k \mapsto P_*(k)$ is smooth from $\R^d$ to $\mathcal{B}(\Hf)$ (equipped with the operator norm);
\item the map $k \mapsto P_*(k)$ is $\tau$-covariant, \ie
\[ P_*(k + \lambda) = \tau(\lambda) \, P_*(k) \, \tau(\lambda)^{-1}  \qquad \forall k \in \R^d, \quad \forall \lambda \in \Gamma^{*}; \]
\item \label{p3} there exists an antiunitary operator\footnote{By \emph{antiunitary} operator we mean a surjective antilinear operator $C: \Hi \rightarrow \Hi$,    such that $\scal{C\phi}{C \psi}_{\Hi} = \scal{\psi}{\phi}_{\Hi}$ for any $\phi, \psi \in \Hi$.} 
$C$ acting on $\Hf$ such that
\[ P_*(-k) =  C \, P_*(k) \, C^{-1}  \qquad \mbox{ and  } \qquad C^{2} =1. \]
\end{enumerate}
\end{prop}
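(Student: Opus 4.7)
The plan is to derive all three properties from a single Riesz projection representation of $P_*(k)$ combined with the corresponding invariance properties of the fiber operator $H(k)$. By the gap condition \eqref{Gap condition}, for each $k_0 \in \B$ there is a bounded, positively oriented contour $\Gamma \subset \C$ separating $\sigma_*(k_0)$ from $\sigma(H(k_0)) \setminus \sigma_*(k_0)$. Upper semicontinuity of the spectrum for an analytic family of self-adjoint operators with common domain makes the same $\Gamma$ admissible for $k$ in a neighborhood of $k_0$, so that
\[
P_*(k) = \frac{\iu}{2\pi} \oint_{\Gamma} (H(k) - z)^{-1} \, \di z
\]
for all such $k$. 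All three claims will then follow by transferring properties of $H(k)$ (and its resolvent) to $P_*(k)$ through this integral.

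For property (p1), I would decompose $H(k) = H(0) + 2 k \cdot (-\iu \nabla_y) + |k|^2$ and observe that, under \eqref{Katosmallness}, the perturbation $2k \cdot (-\iu \nabla_y) + |k|^2$ is $H(0)$-bounded with relative bound zero. Thus $\set{H(k)}_{k \in \C^d}$ is a self-adjoint holomorphic family of Kato type (A) on the common domain $H^2(\T^d_Y)$, the resolvent is jointly real-analytic on its natural domain, and $k \mapsto P_*(k)$ is real-analytic (in particular smooth) in operator norm. For property (p2), a short direct computation using $\tau(\la) (-\iu \nabla_y) \tau(\la)^{-1} = -\iu \nabla_y - \la$ together with the $\Gamma$-periodicity of $V_\Gamma$ yields $H(k + \la) = \tau(\la) H(k) \tau(\la)^{-1}$ for every $\la \in \Gamma^*$; this identity transfers verbatim to the resolvent, and hence, via the Riesz integral, to $P_*$.

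Property (p3) is where the real-valuedness of $V_\Gamma$ is used. I would take $C$ to be complex conjugation in the position representation, $(C \psi)(y) := \overline{\psi(y)}$, which is antiunitary on $\Hf$ with $C^2 = \Id$. Since $V_\Gamma$ is real and $C (-\iu \nabla_y) C^{-1} = \iu \nabla_y$, one computes $C H(k) C^{-1} = H(-k)$. Conjugating the Riesz formula by $C$ and taking due account of the antilinearity (so that $C z C^{-1} = \bar z$, the contour $\Gamma$ is replaced by $\bar\Gamma$, and the orientation reversal cancels the sign coming from $\overline{\iu}=-\iu$) then gives $P_*(-k) = C P_*(k) C^{-1}$.

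I do not anticipate a genuine obstacle: the three claims are parallel transfers, through the Riesz formula, of three structural properties of $H(k)$---smoothness in $k$, conjugation by $\tau(\la)$, and conjugation by $C$. The only point requiring honest care is the analytic-family machinery in the proof of (p1), since $H(k)$ is unbounded; however, the $k$-independence of the domain $H^2(\T^d_Y)$ and the infinitesimal $H(0)$-boundedness of the perturbation place one squarely within standard Kato perturbation theory, and likewise make the conjugation arguments in (p2) and (p3) unambiguous.
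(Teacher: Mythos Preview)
Your argument is correct and is precisely the standard route: write $P_*(k)$ as a Riesz integral, show that $\set{H(k)}$ is a holomorphic family of type (A), and transfer the conjugation identities $\tau(\la) H(k) \tau(\la)^{-1} = H(k+\la)$ and $C H(k) C^{-1} = H(-k)$ through the resolvent to the projector. The paper itself does not prove this Proposition; it simply cites \cite{PanatiPisante} (together with \cite{RS4,Ne91}) for the result, and the proof there follows exactly the Riesz-formula strategy you outline. So there is nothing to compare: your sketch \emph{is} the intended argument.

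One small slip worth fixing: with $\tau(\la)\psi(y)=\eu^{-\iu\la\cdot y}\psi(y)$ one actually gets $\tau(\la)(-\iu\nabla_y)\tau(\la)^{-1}=-\iu\nabla_y+\la$, not $-\iu\nabla_y-\la$. Your conclusion $H(k+\la)=\tau(\la)H(k)\tau(\la)^{-1}$ is nonetheless correct, so this is a typo rather than a gap.
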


The antiunitary operator $C$ appearing in \ref{p3} is explicitly given by the complex conjugation in $\Hf = L^2(\T^d_Y)$ and, in particular,  one has $C \tau(\la) = \tau(-\la) C$ for all $\la \in \Gamma^*$. 

In the multi-band case, it is convenient  \cite{Bl, Cl1}  to relax the notion of Bloch function and to consider \emph{quasi-Bloch functions}, defined as elements $\phi \in \Hi_{\tau}$ such that
\[ P_*(k) \phi(k) = \phi(k), \qquad \norm{\phi(k)}_{\Hf}=1, \qquad \text{ for a.e. }  k \in \B. \]
A \emph{Bloch frame} is, by definition,  a family of quasi-Bloch functions $\set{\phi_a}_{a=1, \ldots, m}$, constituting an orthonormal basis of $\Ran P_*(k)$ at a.e.\ $k \in \B$. \newline
In this context, a non-abelian Bloch gauge appears, since whenever $\set{\phi_a}$ is a Bloch frame, then one obtains another Bloch frame $\set{\widetilde \phi_a}$ by setting
$$
\widetilde \phi_a (k) = \sum_{b =1}^m \phi_b(k) \, U_{ba} (k)  \qquad \text{ for some unitary matrix } U(k).  
$$
Equipped with this terminology, we rephrase a classical definition \cite{Cl2} as follows:  

\begin{dfn}[Composite Wannier functions] \label{Def:CompositeWannierBasis}
The \emph{composite Wannier functions}  $\set{w_1, \ldots, w_m} \subset L^2(\R^d)$  associated to a Bloch frame 
$\set{\phi_1, \ldots, \phi_m} \subset \Hi_{\tau}$ are defined as 
\[ w_a(x) := \left( \UZ^{-1} \phi_a \right)(x) = \frac{1}{|\B|^{1/2}} \int_{\B} \di k \,\eu^{ \iu k \cdot x} \phi_a(k, x). \]
\end{dfn}

An orthonormal basis of $ \UZ^{-1}  \Ran P_*$ is readily obtained by considering the translated functions
$$
w_{\gamma, a} (x) :=   w_{a}(x - \gamma).
$$
In view of the orthogonality of the trigonometric polynomials, the set $\set{w_{\gamma, a} }_{\gamma \in \Gamma, 1 \leq a \leq m}$ is an orthonormal basis of $ \UZ^{-1}  \Ran P_*$, which we refer to as  a \emph{composite Wannier basis}. 
The above-mentioned Bloch gauge freedom implies that the latter basis is not unique, and its properties (\eg localization) will in general depend on the choice of a Bloch gauge.  

As emphasized in the Introduction, the existence of an orthonormal basis of well-localized Wannier functions is a crucial issue in solid-state physics. In view of \eqref{Zak equivalence}, the existence of a composite Wannier basis consisting of almost-exponentially localized functions is equivalent to the existence of a $C^\infty$-smooth Bloch frame for $\set{P_*(k)}_{k \in \R^d}$. The existence of the latter might be \emph{a priori} obstructed since, as noticed by several authors \cite{Kohn59, Cl1, Ne91}, 
there might be  competition between the smoothness of the function $k \mapsto \phi_a(k)$ and its pseudo-periodicity properties, here encoded in the fact that $\phi_a \in \Hi_{\tau}$ must satisfy \eqref{Zak properties}.  
\emph{A posteriori},  it has been proved that,  as a consequence of the time-reversal symmetry of the system, encoded in property \ref{p3}, this obstruction is absent, yielding the existence of a $C^{\infty}$-smooth (actually, analytic) Bloch frame for any $d \leq 3$ and $m \in \N$  \cite{Panati2007, BrouderPanati2007}.  The  result in \cite{Panati2007}, however, neither provides explicitly such a Bloch frame, nor it guarantees that it is time-reversal symmetric. In the next Sections, we tackle these problems in a more general framework.

\newpage
\section{Assumptions and main results}\label{Sec:results}

Abstracting from the case of periodic Schr\"{o}dinger operators, we state our results in a general setting. 
Our assumptions are designed to rely only on two fundamental symmetries of the system, 
namely covariance with respect to translations by vectors in the dual lattice  
and a time-reversal  symmetry of bosonic type, \ie with a time-reversal operator $\Theta$ satisfying $\Theta^2 = \Id$  (see Remark \ref{Rem:FermionicSystems} for the fermionic case).  In view of that, the following abstract results apply both to continuous models, as \eg the real Schr\"{o}dinger operators considered in the previous Section, and to discrete models, as \eg 
 the Haldane model \cite{Haldane88}.   

In the following, we let $\Hi$ be a separable Hilbert space with scalar product $\scal{\cdot}{\cdot}$, $\BH$ denote the algebra of  bounded linear operators on $\Hi$, and $\U(\Hi)$ the group of unitary operators on $\Hi$. We also consider a maximal lattice $\Lambda \simeq \Z^d \subset \R^d$ which, in the application to Schr\"{o}dinger operators, is identified with the dual (or reciprocal) lattice $\Gamma^*$. 

\begin{assumption} \label{Ass:projectors}
We consider a family of orthogonal projectors $\set{P(k)}_{k \in \R^d} \subset \BH$ satisfying the following assumptions:
\begin{enumerate}[label=$(\mathrm{P}_{\arabic*})$,ref=$(\mathrm{P}_{\arabic*})$]
\item \label{item:smooth} \emph{smoothness}: the map $\R^d \ni k \mapsto P(k) \in \BH$ is $C^\infty$-smooth;
\item \label{item:tau} \emph{$\tau$-covariance}: the map $k \mapsto P(k)$ is covariant with respect to a unitary representation%
\footnote{This means that $\tau(0) = \Id_{\Hi}$ and $\tau(\lambda_1 + \lambda_2) = \tau(\lambda_1) \tau(\lambda_2)$ for all $\lambda_1, \lambda_2 \in \Lambda$. It follows in particular that $\tau(\lambda)^{-1} = \tau(\lambda)^* = \tau(-\lambda)$ for all $\lambda \in \Lambda$.} %
$\tau  :   \Lambda \to \U(\Hi)$, $\la \mapsto \tau(\la) \equiv \tau_\la$, in the sense that
\[ P(k+\lambda) = \tau_\la \, P(k) \, \tau_\la^{-1} \quad \text{for all } k \in \R^d, \, \lambda \in \Lambda; \]
\item \label{item:TRS} \emph{time-reversal symmetry}: there exists an antiunitary %
operator $\Theta$ acting on $\Hi$, called the \emph{time-reversal operator}, such that 
\[ P(-k) = \Theta \, P(k) \, \Theta^{-1} \quad \text{and} \quad \Theta^2 =  \Id_{\Hi}. \]
\end{enumerate}
Moreover, we assume the following
\begin{enumerate}[resume,label=$(\mathrm{P}_{\arabic*})$,ref=$(\mathrm{P}_{\arabic*})$]
\item \label{item:TRtau} \emph{compatibility condition}: for all $\lambda \in \Lambda$ one has $\Theta \, \tau_\la = \tau_\la^{-1} \,\Theta$.  \hfill  $\lozenge$ 
\end{enumerate}  
\end{assumption}
 
It follows from the assumption \ref{item:smooth} that the rank $m$ of the projector $P(k)$ is constant in $k$. We will assume  that $m < + \infty$. Proposition \ref{Prop P properties} guarantees that the above assumptions are satisfied by the spectral projectors 
$\set{P_*(k)}_{k \in \R^d}$ corresponding to an isolated family of Bloch bands of a \emph{real} periodic Schr\"{o}dinger operator.  

\bigskip

\begin{dfn}[Symmetric Bloch frame] \label{dfn:Bloch}
Let $\mathcal{P} =\set{P(k)}_{k \in \R^d}$ be a family of projectors satisfying Assumption \ref{Ass:projectors}.
A \textbf{local Bloch frame} for $\mathcal{P}$ on a region $\Omega \subset \R^d$ is a map 
\begin{eqnarray*}
\Phi : \, & \Omega & \longrightarrow \quad \Hi \oplus \ldots \oplus \Hi = \Hi^m \\
& k &\longmapsto  \quad  (\phi_1(k), \ldots, \phi_m(k))
\end{eqnarray*}  such that for a.e. $k \in \Omega$ the set $\set{\phi_1(k), \ldots, \phi_m(k)}$ is an orthonormal basis spanning $\Ran P(k)$.  If $\Omega = \R^d$ we say that $\Phi$ is a \textbf{global Bloch frame}.
Moreover, we say that a (global) Bloch frame is 
\begin{enumerate}[label=$(\mathrm{F}_{\arabic*})$,ref=$(\mathrm{F}_{\arabic*})$]
\setcounter{enumi}{-1}
\item  \emph{continuous} if the map $\phi_a : \R^d  \to \Hi^m$ is continuous for all $ a \in \set{1, \ldots, m}$;
\item  \label{item:F1}  \emph{smooth}  if the map $\phi_a : \R^d  \to \Hi^m$  is $C^{\infty}$-smooth for all $a \in \set{1, \ldots, m}$;
\item   \label{item:F2} \emph{$\tau$-equivariant} if 
\begin{equation*} \label{tau-cov}
\phi_a(k + \lambda) = \tau_\la \, \phi_a(k) \quad \text{for all } k \in \R^d, \: \lambda \in \Lambda, \: a \in \set{1, \ldots, m};
\end{equation*}
\item \label{item:F3} \emph{time-reversal invariant} if
\begin{equation*} \label{tr}
\phi_a(-k) =  \Theta \, \phi_a(k)  \quad \text{for all } k \in \R^d, \: a \in \set{1, \ldots, m}.
\end{equation*}
\end{enumerate}
A global Bloch frame is called {\textbf{symmetric}} if satisfies both $(\mathrm{F}_{2})$ and $(\mathrm{F}_{3})$.
\hfill  $\lozenge$ 
\end{dfn}


\begin{thm}[Abstract result] \label{Thm:SmoothBlochFrames}
Assume $d \leq 3$. Let $\mathcal{P} =\set{P(k)}_{k \in \R^d}$ be a family of orthogonal projectors satisfying Assumption \ref{Ass:projectors},  with finite rank $m \in \N$. Then there exists a global \emph{\textbf{smooth symmetric Bloch frame}} for $\mathcal{P}$. Moreover, the proof is explicitly constructive.
\end{thm}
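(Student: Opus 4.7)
My strategy is to decouple the topological argument from the regularity question. I first construct a \emph{continuous} global symmetric Bloch frame in dimension $d \leq 3$ by an explicit cell-by-cell procedure on a fundamental domain, and only afterwards upgrade continuity to $C^\infty$-smoothness by invoking the general symmetry-preserving approximation result announced for Section \ref{Sec:SmoothingProcedure}. The content of the theorem thus concentrates in the continuous construction.

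The construction proceeds by iterated reduction of the domain. By $\tau$-equivariance \ref{item:F2}, a frame on $\R^d$ is determined by its restriction to a fundamental domain $\B$ for $\Lambda$, with boundary identifications on $\partial\B$ governed by $\tau_\lambda$. By time-reversal invariance \ref{item:F3}, this is further reduced to a half-cell $\Bred$ obtained as a fundamental domain for the combined action of $\Lambda$ and $k\mapsto -k$, with additional gluing conditions on the portion of $\partial\Bred$ paired by the inversion through the antiunitary $\Theta$. I then build $\Phi$ on $\Bred$ by induction on the dimension of a suitable CW decomposition. At the $0$-cells I privilege the \emph{time-reversal invariant momenta} $k_*\in\R^d$, that is, those with $2k_*\in\Lambda$: at each such point the operator $\Theta_*:=\tau_{2k_*}\,\Theta$ is antiunitary, commutes with $P(k_*)$ as a direct consequence of \ref{item:tau}--\ref{item:TRS}, and squares to $\Id_{\Hi}$ by the compatibility condition \ref{item:TRtau}. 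Hence $\Theta_*$ defines a real structure on $\Ran P(k_*)$, and I take any orthonormal basis of its $\Theta_*$-fixed real subspace to serve as $\Phi(k_*)$; at the remaining vertices of $\Bred$ I pick an arbitrary orthonormal basis of $\Ran P(\,\cdot\,)$. On $1$-cells the extension is unobstructed, since the space of orthonormal $m$-frames in $\Ran P(k)$ is a $\U(\C^m)$-torsor and $\U(\C^m)$ is path-connected; and on $3$-cells the extension is again free, the relevant obstruction lying in $\pi_2(\U(\C^m))=0$.

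The genuine difficulty is the extension across $2$-cells, and this is where I expect the argument to be most delicate. For each $2$-face $F$, comparing the boundary frame already defined on $\partial F$ to a local trivialization over $\overline F$ yields a continuous map $\partial F\to\U(\C^m)$ whose winding number is a class in $\pi_1(\U(\C^m))\cong\Z$, the a priori obstruction to extending $\Phi$ over $F$. The key observation is that the $2$-faces of $\Bred$ come in pairs related by the involution $k\mapsto -k$ (with possibly a few self-paired ones), and the bosonic conditions \ref{item:TRS}--\ref{item:TRtau} force the winding numbers on paired faces to cancel and those on self-paired faces to vanish. The residual windings can then be killed simultaneously by modifying the frame on the $1$-skeleton through explicit loop multiplications in $\U(\C^m)$, exploiting the freedom that survives after fixing $\Phi$ at the TRIMs. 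This redistribution step is the algorithmic heart of the construction, and is also where the bosonic assumption $\Theta^2=\Id_{\Hi}$ is genuinely used: for a fermionic $\Theta$ the cancellation fails in general, as recalled in the Introduction.

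Once every $2$-cell obstruction has been neutralized, the frame extends continuously over all of $\Bred$, and then to $\R^d$ by symmetric propagation, producing a global continuous symmetric frame. Finally, the general smoothing lemma of Section \ref{Sec:SmoothingProcedure} replaces this continuous frame by a $C^\infty$ one while preserving both \ref{item:F2} and \ref{item:F3}, by mollifying locally and then symmetrizing against the $\Lambda$- and $\Theta$-actions, using that both constraints are closed under uniform convergence. This yields the smooth symmetric Bloch frame claimed by the theorem.
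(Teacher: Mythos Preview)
Your overall architecture matches the paper's: reduce to the effective half-cell $\Bred$, build a continuous frame inductively over a CW decomposition using $\pi_0(\U(\C^m))=\pi_2(\U(\C^m))=0$, isolate the $\pi_1(\U(\C^m))\cong\Z$ obstruction at the $2$-cells, and finish with the symmetry-preserving smoothing of Section~\ref{Sec:SmoothingProcedure}. Your treatment of the vertices is also equivalent to the paper's: choosing a $\Theta_*$-real orthonormal basis of $\Ran P(k_*)$ is exactly the solution of the vertex equation $U\sub{obs}(k_\lambda)=U(k_\lambda)U(k_\lambda)\Tr$ (Lemma~\ref{V->U}).

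The genuine gap is your handling of the $2$-cell obstruction. Your assertion that ``winding numbers on paired faces cancel and those on self-paired faces vanish'' is not established, and as stated it is not what actually happens. In the paper's $d=2$ construction there is a single $2$-cell $\Bred$; it is neither paired with another cell of $\Bred$ nor is its boundary winding forced to be zero by the symmetries. What the paper proves instead (Lemma~\ref{Lem:Xmatrix}) is that one can \emph{correct} the winding by an explicit gauge transformation $X:\partial\Bred\to\U(\C^m)$ supported on the edge $E_3\cup E_4$ --- the edge mapped to itself by $t_{e_1}c$ --- carrying any prescribed degree $-r$ while still satisfying the symmetry constraint $X(t_{e_1}c(k))=\overline{X}(k)$. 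The concrete formula is $X(k)=\diag(\eu^{-\iu 2\pi r(k_2+1/2)},1,\ldots,1)$, and the verification that this respects the constraint is a one-line computation that uses $\eu^{\iu 2\pi r}=1$; it is precisely here that the bosonic hypothesis $\Theta^2=\Id$ enters, and precisely here that the fermionic analogue fails. Your proposal acknowledges that a ``redistribution step'' is needed and that it is the ``algorithmic heart,'' but you do not supply it; without an argument of this type the proof is incomplete.

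A secondary imprecision: in $d=3$ the $2$-faces of $\Bred^{(3)}$ are not organized as you describe. The faces $F_{2,\pm}$ and $F_{3,\pm}$ are identified by \emph{translations} $\tau_{e_j}$, not by inversion, while $F_{1,0}$ and $F_{1,+}$ each carry an \emph{internal} $\Z_2$-symmetry ($c$ and $t_{e_1}c$ respectively). The paper handles them by repeatedly invoking the $2$-dimensional construction (packaged as Lemma~\ref{Lem:Macro2}) on four of the faces and on half of $F_{1,+}$, before filling the interior via $\pi_2(\U(\C^m))=0$. Your sketch would need to be refined along these lines to go through.
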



As mentioned in Section \ref{Sec:Introduction}, the relevance of Theorem \ref{Thm:SmoothBlochFrames} is twofold. On the one hand, it provides the first constructive proof, for $m>1$ and $d>1$, of the existence of smooth $\tau$-equivariant Bloch frames, thus providing an explicit algorithm to obtain an almost-exponentially localized composite Wannier basis. On the other hand, the fact that such a smooth Bloch frame also satisfies \ref{tr} implies the existence of \emph{real-valued} localized composite Wannier functions, a fact indirectly conjectured in the literature about optimally localized Wannier functions, and confirmed by numerical evidence \cite[Section V.B]{MaVa}.  
We summarize these consequences in the following statement. 

\begin{thm}[Application to Schr\"{o}dinger operators] \label{Thm:RealLocalizedWannierBasis}
Assume $d \leq 3$. Consider a real periodic Schr\"{o}dinger operator in the form $H_{\Gamma} = - \Delta + V_{\Gamma} $, with $V_{\Gamma}$ satisfying \eqref{Katosmallness}, acting on $H^2(\R^d) \subset L^2(\R^d)$.    
Let  $\mathcal{P_*} =\set{P_*(k)}_{k \in \R^d}$  be the set of spectral projectors corresponding to a family of $m$ Bloch bands satisfying condition \eqref{Gap condition}. Then one constructs an orthonormal basis $\set{w_{\gamma, a}}_{\gamma \in \Gamma,  1 \leq a \leq  m}$ 
of $\U_{\rm BF}^{-1} \Ran P_{*}$  consisting  of composite Wannier functions such that: 
\begin{enumerate} [label={\rm (\roman*)},ref={\rm (\roman*)}]
\item  \label{item:R-valued} each function $w_{\gamma, a}$ is {\textbf{real-valued}}, and 
\item  \label{item:exp-localized} each function $w_{\gamma, a}$ is  \textbf{almost-exponentially localized}, in the sense that 
$$
\int_{\R^d}  \langle x \rangle^{2r} |w_{\gamma, a}(x)|^2 \di x  < + \infty  \qquad \text{for all } r \in \N. 
$$
\end{enumerate} 
\end{thm}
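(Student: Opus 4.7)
The plan is to reduce Theorem \ref{Thm:RealLocalizedWannierBasis} to the abstract Theorem \ref{Thm:SmoothBlochFrames}, together with the duality \eqref{Zak equivalence} between regularity of quasi-Bloch functions and decay of composite Wannier functions.

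First I would verify that the family $\mathcal{P}_* = \set{P_*(k)}_{k \in \R^d}$ satisfies Assumption \ref{Ass:projectors}. Properties \ref{item:smooth}, \ref{item:tau} and \ref{item:TRS} are precisely the content of Proposition \ref{Prop P properties}, with $\Lambda = \Gamma^*$, with $\tau(\la)$ the multiplication operator $\eu^{-\iu \la \cdot y}$, and with $\Theta = C$ equal to complex conjugation on $\Hf = L^2(\T^d_Y)$; in particular $\Theta^2 = \Id_{\Hi}$, so the time-reversal symmetry is of the bosonic type required by the assumption. The compatibility condition \ref{item:TRtau} is the identity $C \tau(\la) = \tau(-\la) C$ recorded right after Proposition \ref{Prop P properties}. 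Since $d \leq 3$ and $m < +\infty$, Theorem \ref{Thm:SmoothBlochFrames} then provides a global smooth symmetric Bloch frame $\set{\phi_1, \ldots, \phi_m}$ for $\mathcal{P}_*$.

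I then set $w_a := \UZ^{-1} \phi_a$ as in Definition \ref{Def:CompositeWannierBasis} and $w_{\gamma, a}(x) := w_a(x - \gamma)$, so that $\set{w_{\gamma, a}}_{\gamma \in \Gamma,\, 1 \le a \le m}$ is an orthonormal basis of $\UZ^{-1} \Ran P_*$ by the remark following Definition \ref{Def:CompositeWannierBasis}. Only \ref{item:R-valued} and \ref{item:exp-localized} then remain to be checked. For \ref{item:R-valued}, the time-reversal invariance \ref{item:F3} combined with $\Theta = C$ yields the pointwise relation $\phi_a(-k, y) = \overline{\phi_a(k, y)}$, hence
\[ \overline{w_a(x)} = \frac{1}{|\B|^{1/2}} \int_{\B} \di k \, \eu^{-\iu k \cdot x} \, \overline{\phi_a(k, x)} = \frac{1}{|\B|^{1/2}} \int_{\B} \di k \, \eu^{-\iu k \cdot x} \phi_a(-k, x), \]
and the change of variables $k \mapsto -k$, under which the symmetric cell $\B$ is invariant, reproduces $w_a(x)$. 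Translation by $\gamma \in \Gamma$ preserves being real-valued, so every $w_{\gamma, a}$ is real. For \ref{item:exp-localized}, $\tau$-equivariance \ref{item:F2} places $\phi_a$ in $\Hi_\tau$, while smoothness \ref{item:F1} places it in $H^r_{\rm loc}(\R^d, \Hf)$ for every $r \in \N$; the second line of \eqref{Zak equivalence} then gives $\langle x \rangle^r w_a \in L^2(\R^d)$ for every $r \in \N$, and this property is stable under lattice translations.

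All the genuine difficulty is concentrated in Theorem \ref{Thm:SmoothBlochFrames}; once that result is granted, the passage to Wannier functions is a straightforward Fourier-transform bookkeeping. The one place where a bit of care is needed is the identification of the abstract $\Theta$ with the pointwise complex conjugation $C$ on $\Hf$: the reality of $w_a$ relies on this explicit form, and this is precisely where the hypothesis that $V_\Gamma$ is real-valued, built into the statement of the theorem via Proposition \ref{Prop P properties}, enters crucially.
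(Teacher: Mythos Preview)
Your proposal is correct and follows essentially the same route as the paper's own proof: verify Assumption \ref{Ass:projectors} via Proposition \ref{Prop P properties} (including the compatibility condition \ref{item:TRtau}), invoke Theorem \ref{Thm:SmoothBlochFrames} to obtain a smooth symmetric Bloch frame, deduce almost-exponential localization from \eqref{Zak equivalence}, and deduce real-valuedness from \ref{item:F3} together with the explicit identification $\Theta = C$ via the change of variables $k \mapsto -k$. Your write-up is in fact slightly more detailed than the paper's in spelling out \ref{item:TRtau} and the role of the reality of $V_\Gamma$.
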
  

\begin{proof}
In view of Proposition \ref{Prop P properties},  the family $\mathcal{P_*} =\set{P_*(k)}_{k \in \R^d}$ satisfies Assumption \ref{Ass:projectors}. Thus, by Theorem \ref{Thm:SmoothBlochFrames}, there exists a global smooth symmetric Bloch frame in the sense of  Definition \ref{dfn:Bloch}. 
In view of \ref{item:F1} and \ref{item:F2},  each $\phi_a$ is an element of $\Hi_{\tau} \cap C^{\infty}(\R^d, \Hf)$, and thus 
$\Phi$ is a smooth Bloch frame in the sense of Section \ref{Sec:Schrodinger}. 

By \eqref{Zak equivalence}, the corresponding Wannier functions $w_a := \UZ^{-1} \phi_a$ satisfy $ \langle x \rangle^r w_a \in L^2(\R^d)$ for all $r \in \N$.  Then the set of all the translated functions  $\set{w_{\gamma, a}}$,  with $w_{\gamma, a}(x) = w_{a}(x - \gamma)$, provides a composite Wannier basis consisting of almost-exponentially localized functions, as stated in \ref{item:exp-localized}.

Moreover, $\Phi$ satisfies  \ref{tr} which in this context reads
$\phi_a(-k) = C \phi_a(k) = \overline{\phi_a}(k)$, since $C$ is just complex conjugation in $L^2(\T^d_Y)$. 
By Definition \ref{Def:CompositeWannierBasis}, one concludes that
\begin{equation*}
\overline{w_a}(x)  =   \frac{1}{|\B|^{1/2}} \int_{\B} \di k \,\eu^{ - \iu k \cdot x} \,\, \overline{\phi_a}(k, x)  
=  \frac{1}{|\B|^{1/2}} \int_{\B} \di k \,\eu^{  \iu (-k) \cdot x} \,\, \phi_a(- k, x)   = w_a(x), 
\end{equation*}
which yields property \ref{item:R-valued} and concludes the proof. 
\end{proof}


\bigskip

We sketch the structure of the proof of Theorem \ref{Thm:SmoothBlochFrames},  provided in Sections
\ref{Sec:Algorithm} and \ref{Sec:SmoothingProcedure}. 
First, one easily notices that, in view of properties $(\mathrm{F}_{2})$ and $(\mathrm{F}_{3})$, a global symmetric Bloch frame $\Phi: \R^d \to \Hi^m$
is completely specified by the values it assumes on the \emph{effective unit cell}
\[ \Bred := \set{k = \sum_j k_j e_j  \in \B: k_1 \ge 0}.  \]
Indeed, every point $k \in \R^d$ can be written (with an a.e.-unique decomposition) as $k = (-1)^s k' + \la$,  for some $k' \in \Bred$, $\la \in \Lambda$  and $s \in \set{0,1}$. Then the symmetric Bloch frame $\Phi$ satisfies 
$\Phi(k) = \tau_\la \Theta^s \, \Phi(k')$ for $k' \in \Bred$.
Viceversa,  a local Bloch frame $\cell{\Phi} : \Bred \to \Hi^m$ can be canonically extended to a global {symmetric} Bloch frame $\Phi$ by posing
\begin{equation} \label{Eq:Symmetric extension}
\Phi(k) = \tau_\la \, \Theta^s \, \cell{\Phi}(k')  \qquad \qquad \mbox{ for } k = (-1)^s k' + \la. 
\end{equation} 
However, to obtain a global \emph{continuous} Bloch frame,  the map $\cell{\Phi} : \Bred \to \Hi^m$ must satisfy some non-trivial \virg{gluing conditions} on the boundary $\partial \Bred$, involving vertices, edges (for $d \geq 2$), faces (for $d \geq 3$), and so on.  In Section \ref{Sec:Algorithm}, we  investigate in detail such conditions, showing that it is always possible to construct a local continuous Bloch frame $\cell{\Phi} :  \Bred \rightarrow \Hi^m$ satisfying them, provided $d \leq 3$. 
More specifically, we assume as given a continuous Bloch frame $\Psi :  \Bred \rightarrow \Hi^m$, called \emph{the input frame},
which does not satisfy any special condition on the boundary of $\Bred$, as \eg the outcome of numerical computations in solid-state physics. Then we explicitly construct a unitary matrix $\cell{U}(k)$ such that the \virg{corrected} frame
$$
\cell{\Phi}(k)_a  := \sum_{b =1}^{m} \Psi_b(k) \, \cell{U}(k)_{ba}
$$ 
is still continuous and satisfies all the relevant symmetry conditions on $\partial \Bred$.  Then, formula \eqref{Eq:Symmetric extension}  will provide a  global \emph{continuous symmetric} Bloch frame $\Phi$. \newline
A na\"{i}f smoothing procedure,  based on the Steenrod's Approximation Theorem, starting from $\Phi$ would yield a {global smooth $\tau$-equivariant} Bloch frame which, in general, does not satisfies property $(\mathrm{F}_{3})$.  For this reason, we develop in Section  \ref{Sec:SmoothingProcedure}  a new symmetry-preserving smoothing algorithm which, starting from a {global continuous symmetric} Bloch frame,  produces a global \emph{smooth symmetric} Bloch frame arbitrarily close to the former one (Theorem \ref{Lem:SmoothingProcedure}).  The latter procedure, which holds true in any dimension, yields the global smooth symmetric Bloch frame  whose existence is claimed in Theorem \ref{Thm:SmoothBlochFrames}. 


\begin{rmk}[Systems with fermionic time-reversal symmetry] \label{Rem:FermionicSystems}
Our results heavily rely on the fact that we consider a \emph{bosonic} (or even) time-reversal (TR) symmetry. 
In other instances, as in the context of TR-symmetric topological insulators \cite{HasanKane},  and specifically 
in the Kane-Mele model \cite{KaneMele2005}, assumption \ref{item:TRS} is replaced by 
\begin{enumerate}[label=$(\mathrm{P}_{\arabic*, -})$,ref=$(\mathrm{P}_{\arabic*,-})$]
\setcounter{enumi}{2}
\item \label{item:fermionicTRS} \emph{fermionic time-reversal symmetry}: there exists an antiunitary operator $\Theta$ acting on $\Hi$ such that 
\[ P(-k) = \Theta \, P(k) \, \Theta^{-1} \quad \text{and} \quad \Theta^2 = -  \Id_{\Hi}. \]
\end{enumerate}
Then the statement analogous to Theorem \ref{Thm:SmoothBlochFrames} is false: there might be topological  obstruction to the existence of a continuous symmetric Bloch frame \cite{FuKane, GrafPorta}. One proves  \cite{GrafPorta, FiMoPa} that this obstruction is classified by a $\Z_2$ topological invariant for $d=2$, and by four $\Z_2$ invariants for $d =3$, and that the latter equal the indices introduced by Fu, Kane and Mele \cite{FuKane, FuKaneMele}. However, if one does not require time-reversal symmetry but only $\tau$-equivariance,  then a global smooth Bloch frame does exist  even in the fermionic case,  as a consequence of the vanishing of the first Chern class and of the result in \cite{Panati2007}, see \cite{MoPa} for a detailed review.    
\hfill  $\lozenge$ 
\end{rmk}

\newpage
\section{Proof:  Construction of a smooth symmetric Bloch frame} \label{Sec:Algorithm}

In this Section, we provide an explicit algorithm to construct a global smooth symmetric Bloch frame,  
as claimed in  Theorem \ref{Thm:SmoothBlochFrames}. 

Our general strategy  will be the following. 
We consider a local continuous  (resp.\ smooth)
\footnote{A smooth input frame $\Psi$ is required only to write an explicit formula for the continuous extension from 
the boundary $\partial \Bred$ to the whole $\Bred$,  as detailed in Remarks \ref{Rem:ExplicitFilling 2d} and  \ref{Rem:ExplicitFilling 3d}. At a first reading, the reader might prefer to focus on the case of a continuous input frame. 
}\  
Bloch frame $\Psi : \Bred \to \Hi^m$, which  always exists since $\Bred$ is contractible and no special conditions on the boundary are imposed.
\footnote{Moreover, a smooth $\Psi$ can be explicitly constructed by using the intertwining unitary by Kato and Nagy \cite[Sec. I.6.8]{Kato}  on finitely-many sufficiently small open sets covering  $\Bred$.  In the applications to computational physics, 
$\Psi$ corresponds to the outcome of the numerical diagonalisation of the Hamiltonian at fixed crystal momentum, 
followed by a choice of quasi-Bloch functions and by a standard routine which corrects the phases to obtain a (numerically) 
continuous (resp.\ smooth) Bloch frame on $\Bred$.
}\  
We look for a unitary-matrix-valued  map
${U}: \Bred \to \U(\C^m)$ such that the modified local Bloch frame 
\begin{equation} \label{Eq: healer_acts}
{\Phi}_a(k) =  \sum_{b=1}^m  \Psi_b(k) {U}_{ba}(k), \quad {U}(k) \in \U(\C^m)
\end{equation}
satisfies \ref{item:F2} and \ref{item:F3} on the boundary $\partial \Bred$. The latter requirement 
corresponds to conditions on the values that ${U}$ assumes  
on the vertices, edges and faces of $\partial \Bred$, according to the dimension. These 
conditions  will be investigated in the next Subsections, after a preliminary characterization of the relevant symmetries.

\subsection{The relevant group action}
\label{Sec:Group_action}

Properties  \ref{item:tau} and \ref{item:TRS} are related to some fundamental automorphisms of $\R^d$, namely the maps  $c$ and $t_{\lambda}$ defined by 
\begin{equation} \label{c t_la}
c(k) = - k  \qquad \mbox{ and } \qquad  t_{\lambda}(k) = k + \lambda  \quad \mbox{ for } \la \in \Lambda. 
\end{equation} 
Since $c \, t_{\lambda} = t_{- \lambda} c$ and $c^2 = t_0$, one concludes that the relevant symmetries are encoded in the group \begin{equation} \label{G_d}
G_d := \set{t_\lambda, t_\lambda c}_{\lambda \in \Lambda}  \subset \mathrm{Aut}(\R^d). 
\end{equation}

\bigskip

We notice that, assuming also \ref{item:TRtau}, the action of $G_d$ on $\R^d$ can be lifted to an action on 
$\R^d \times \Fr(m,\Hi)$,  where $\Fr(m,\Hi)$ is the set of orthonormal $m$-frames in $\Hi$. 
To streamline the notation, we denote by $\Phi = (\phi_1, \ldots, \phi_m)$ an element of $\Fr(m, \Hi)$. 
Any bounded linear or antilinear operator $A : \Hi \to \Hi$ acts on frames componentwise, \ie we set 
$$   
A \Phi :=  (A\phi_1, \ldots, A\phi_m).
$$
Moreover, the space $\Fr(m, \Hi)$ carries a free right action%
\footnote{This terminology means that $\Phi \act \id = \Phi$, $(\Phi \act U_1) \act U_2 = \Phi \act (U_1 U_2)$ and that if $\Phi \act U_1 = \Phi \act U_2$ then $U_1 = U_2$, for all $\Phi \in \Fr(m,\Hi)$ and $U_1, U_2 \in \U(\C^m)$.} %
of the group  $\U(\C^m)$, denoted by
\[ (\Phi \act U)_b := \sum_{a = 1}^{m} \phi_a \, U_{ab}. \]
A similar notation appears  in \cite{GrafPorta}.  Notice that, by the antilinearity of the time-reversal operator $\Theta$, 
one has
\[ \Theta (\Phi \act U) = (\Theta \, \Phi) \act \overline{U}, \quad \text{for all } \Phi \in \Fr(m,\Hi), \: U \in \U(\C^m). \]

\noindent A lift of the $G_d$ action from $\R^d$ to $\R^d \times \Fr(m,\Hi)$ is obtained by considering, as generators, the automorphisms $C$ and $T_{\la}$, defined by  
\begin{equation} \label{G_lifted}
C (k, \Phi) =  (c(k), \Theta \, \Phi)  \qquad  T_\la (k, \Phi) = (t_{\la}(k), \tau_{\la} \Phi)
\end{equation}
for any $(k, \Phi) \in \R^d \times \Fr(m,\Hi)$.  The relation  $t_{\lambda} \, c = c \, t_{- \lambda}$ implies that, for every $\la \in \Lambda$, one has to impose the relation
\begin{eqnarray*}
T_{\la} C (k, \Phi) &=& C T_{-\la}(k, \Phi)              \qquad  \forall (k, \Phi) \in \R^d \times \Fr(m,\Hi)     \\
\mbox{\ie } \tau_\la \Theta \, \Phi  &=& \Theta \tau_{-\la} \, \Phi    \qquad \qquad  \forall \Phi \in \Fr(m,\Hi), 
\end{eqnarray*}
which holds true in view of $(\mathrm{P}_{4})$. Thus the action of $G_d$ is lifted to $\R^d \times \Fr(m, \Hi)$. 

\bigskip

Given a family of projectors $\mathcal{P}$  satisfying  \ref{item:tau}, \ref{item:TRS}  and \ref{item:TRtau}, it is natural to consider the  set of global Bloch frames for $\mathcal{P}$, here denoted by $\Fr(\mathcal{P})$. Notice that
$$
\Fr(\mathcal{P}) \subset  \set{f : \R^d \rightarrow \Fr(m, \Hi) } \subset \R^d \times \Fr(m, \Hi).
$$ It is easy to check that the 
action of $G_d$, previously extended to $\R^d \times \Fr(m,\Hi)$,  restricts to $\Fr(\mathcal{P})$. Indeed, whenever $ \Phi$ is an orthonormal frame for $\Ran P(k)$ one has that 
\begin{eqnarray*}
P(t_\la(k)) \tau_\la \Phi & =& \tau_\la P(k) \tau^{-1}_\la \, \tau_\la \Phi  = \tau_\la P(k) \Phi = \tau_\la  \Phi,\\
P(c(k)) \Theta \, \Phi & = &   \Theta P(k) \Theta^{-1} \,  \Theta \, \Phi  =     \Theta P(k) \Phi =  \Theta \, \Phi, 
\end{eqnarray*}
yielding that  $\tau_\la  \Phi$ is an orthonormal frame in  $\Ran(P(t_\la(k)))$ and 
$ \Theta \, \Phi$  is an orthonormal frame in $\Ran(P(c(k)))$.

\subsection{Solving the vertex conditions} 
\label{Sec:VertexConditions}

The relevant vertex conditions are associated to those points $k \in \R^d$ which have a non-trivial stabilizer with respect to the 
action of $G_d$, namely to the points in the set
\begin{equation} \label{Def:vertices}
V_d = \set{k \in \R^d :  \exists \, g \in G_d,  g \neq \Id :  g(k) = k}.
\end{equation}
Since $G_d= \set{t_\la, t_\la \, c}_{\la \in \Lambda}$ and $t_\la$ acts freely on $\R^d$, the previous definition reads
\begin{eqnarray*} 
V_d  &=& \set{k \in \R^d :  \exists \la \in \Lambda  : \,\, t_{\la} c (k) = k} \\
         &=& \set{k \in \R^d :  \exists \la \in \Lambda : - k + \la = k} = \set{\half \la}_{\la \in \Lambda},
\end{eqnarray*}
\ie $V_d$ consists of those points
\footnote{In the context of topological insulators, such points are called \emph{time-reversal invariant momenta} (TRIMs)
in the physics literature.}
which have half-integer coordinates with respect to the basis $\set{e_1, \ldots, e_d}$. 
For convenience, we set $k_\la := \half \la$. 

\goodbreak

If $\Phi$ is a symmetric Bloch frame, then conditions $(\mathrm{F}_{2})$ and $(\mathrm{F}_{3})$ imply that
\begin{equation} \label{VertexCondition} 
\Phi(k_\lambda) = \Phi(t_\lambda c(k_\lambda)) =  \tau_\la \Theta \, \Phi(k_\lambda) \qquad \qquad k_{\la} \in V_d.
\end{equation}
We refer to \eqref{VertexCondition} as the \textbf{vertex condition at the point $k_\la \in V_d$}. For a generic Bloch frame $\Psi$, instead, $\Psi(k_\lambda)$ and $\tau_\lambda \Theta \Psi(k_\lambda)$ are different. Since they both are orthonormal frames in $\Ran P(k_\lambda)$,  there exists a unique unitary matrix $U\sub{obs}(k_\lambda) \in \U(\C^m)$ such that
\begin{equation} \label{UobsVertices}
\Psi(k_\lambda) \act U\sub{obs}(k_\lambda) = \tau_\lambda \Theta \Psi(k_\lambda),  \qquad \la \in \Lambda.
\end{equation}
The obstruction unitary $U\sub{obs}(k_\lambda)$ must satisfy a compatibility condition. Indeed, by applying $\tau_\lambda \Theta$ to both sides of \eqref{UobsVertices} one obtains
\begin{align*}
\tau_\lambda \Theta \left( \Psi(k_\lambda) \act U\sub{obs}(k_\lambda) \right) & = \tau_\lambda \Theta \tau_\lambda \Theta \Psi(k_\lambda)  \\
& = \tau_\lambda \tau_{-\lambda} \Theta^2 \Psi(k_\lambda)  = \Psi(k_\lambda)
\end{align*}
where assumption \ref{item:TRtau} has been used. On the other hand, the left-hand side also reads
\begin{align*}
\tau_\lambda \Theta \left( \Psi(k_\lambda) \act U\sub{obs}(k_\lambda) \right) & = \( \tau_\lambda \Theta \Psi(k_\lambda) \) \act \overline{U\sub{obs}}(k_\lambda) = \\
& = \Psi(k_\lambda) \act \left( U\sub{obs}(k_\lambda) \, \overline{U\sub{obs}}(k_\lambda) \right).
\end{align*}
By the freeness of the action of $\U(\C^m)$ on frames, one concludes that  
\begin{equation} \label{Compatibility}
U\sub{obs}(k_\lambda) =  U\sub{obs}(k_\lambda)\Tr 
\end{equation}
where $M\Tr$ denotes the transpose of the matrix $M$.

The value of the  unknown $U$, appearing in \eqref{Eq: healer_acts}, at the point $k_\la \in V_d$ is constrained by the value of the obstruction matrix $U\sub{obs}(k_\lambda)$. 
Indeed, from  \eqref{VertexCondition} and \eqref{Eq: healer_acts} it follows that for every $\la \in \Lambda$
\begin{align*}
\Psi(k_\lambda) \act U(k_\lambda) & = \Phi(k_\lambda) = \tau_\lambda \Theta \, \Phi(k_\lambda)  \\
& = \tau_\lambda \Theta \left( \Psi(k_\lambda) \act U(k_\lambda) \right) \\
& =  \( \tau_\lambda \Theta \Psi(k_\lambda) \) \act  \overline{U}(k_\lambda) \\
& = \Psi(k_\lambda) \act U\sub{obs}(k_\lambda) \overline{U}(k_\lambda).
\end{align*}
By the freeness of the $\U(\C^m)$-action, we obtain the  condition
\footnote{The presence of the transpose in condition \eqref{Uobs-U} might appear unnatural in the context of our Assumptions. 
A more natural reformulation of  condition \eqref{Uobs-U}, involving an orthogonal structure canonically associated to $\Theta$, will be discussed in a forthcoming paper \cite{CFMP}.}
\begin{equation} \label{Uobs-U}
U\sub{obs}(k_\lambda) = U(k_\lambda)  U(k_\lambda)\Tr.
\end{equation}

\noindent The existence of a solution $U(k_\lambda) \in \U(\C^m)$ to equation \eqref{Uobs-U} is granted by the following Lemma, 
which can be applied to $V = U\sub{obs}(k_\lambda)$ in view of \eqref{Compatibility}. 

\begin{lemma}[Solution to the vertex equation] \label{V->U}
Let $V \in \U(\C^m)$ be such that $V \Tr = V$.  Then there exists a unitary matrix $U \in \U(\C^m)$ such that $V = U \, U\Tr $.
\end{lemma}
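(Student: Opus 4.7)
The plan is to reduce the equation $V = U U\Tr$ to a simultaneous diagonalization problem for a pair of commuting real symmetric matrices. Decompose $V = A + \iu B$ with $A, B \in M_m(\R)$. The hypothesis $V\Tr = V$ translates into $A\Tr = A$ and $B\Tr = B$, while unitarity of $V$ combined with symmetry yields $\Id = V V^* = V \overline{V} = (A+\iu B)(A-\iu B)$; taking real and imaginary parts gives the two relations
\[
A^2 + B^2 = \Id, \qquad [A,B] = 0.
\]

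First I would invoke the classical fact that a commuting pair of real symmetric matrices is simultaneously diagonalizable by a real orthogonal matrix. This produces $O \in \mathrm{O}(m,\R)$ and real diagonal matrices $D_A, D_B$ such that $O\Tr A\, O = D_A$ and $O\Tr B\, O = D_B$. Consequently
\[
O\Tr V O \;=\; D_A + \iu D_B \;=:\; D
\]
is diagonal. Since $O$ is real orthogonal (hence unitary) and $V$ is unitary, $D$ is a diagonal unitary matrix, so its entries are phases $\eu^{\iu \theta_j}$, $\theta_j \in \R$.

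Next, I would set $E := \diag\bigl(\eu^{\iu \theta_j/2}\bigr)$, which satisfies $E^2 = D$ and, being diagonal, $E\Tr = E$. Defining
\[
U := O\, E,
\]
the matrix $U$ is unitary as a product of unitaries, and a direct computation shows
\[
U\, U\Tr \;=\; O E \cdot E\Tr O\Tr \;=\; O\, E^2\, O\Tr \;=\; O\, D\, O\Tr \;=\; V,
\]
which is the required factorization.

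The only nontrivial step is the simultaneous real-orthogonal diagonalization of $A$ and $B$, but this is standard: commuting self-adjoint operators on a real inner product space share a common orthonormal eigenbasis, obtained by diagonalizing $A$ first and then diagonalizing $B$ on each eigenspace of $A$ via the real spectral theorem. So the main conceptual content is the reduction to $A^2+B^2=\Id$, $[A,B]=0$; everything afterwards is a transparent square-root construction.
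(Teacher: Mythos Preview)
Your proof is correct and takes a genuinely different route from the paper. The paper diagonalizes $V$ by an arbitrary unitary $W$, writing $V = W\eu^{\iu M}W^*$, sets $U = W\eu^{\iu M/2}W^*$, and then must argue that the symmetry $V\Tr=V$ forces $\eu^{\iu M}$ to commute with $W\Tr W$; a further step (with the eigenvalue arguments $\mu_j$ ``synchronized'' in $[0,2\pi)$) is needed to conclude that $\eu^{\iu M/2}$ also commutes with $W\Tr W$, after which $UU\Tr=V$ follows. Your approach instead splits $V=A+\iu B$ into real and imaginary parts, observes that unitarity plus symmetry give commuting real symmetric $A,B$, and diagonalizes them simultaneously by a \emph{real orthogonal} $O$. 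Because $O\Tr=O^{-1}$ over $\R$, the square-root step $U=OE$ with $E^2=D$ yields $UU\Tr=ODO\Tr=V$ with no commutation subtlety at all. In effect you are invoking the Autonne--Takagi form of a symmetric unitary; this is more transparent and avoids the branch-choice issue the paper flags in its footnote, at the modest cost of citing the simultaneous real-orthogonal diagonalization of commuting symmetric matrices.
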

\begin{proof} Since $V \in  \U(\C^m)$ is normal, it can be unitarily diagonalised. Hence, there exists a unitary matrix $W \in \U(\C^m)$ 
such that
$$
V = W \eu^{\iu M} W^* 
$$
where $M = \diag(\mu_1, \ldots, \mu_m)$ and each $\mu_j$ is chosen so that
\footnote{The latter condition is crucial: it expresses the fact that the arguments of the eigenvalues $\set{\omega_1, \ldots, \omega_m}$ of  $V$ are \virg{synchronized}, \ie  they are computed by using the same branch of the complex logarithm.}
$\mu_j \in [0, 2\pi)$.  We set 
$$
U = W \eu^{\iu M/2} W^*. 
$$
Since $V\Tr = V$  one has  $\overline{W} \eu^{\iu M} W\Tr  = W \eu^{\iu M} W^* $, yielding 
$$
\eu^{\iu M} W\Tr W = W\Tr W \eu^{\iu M},
$$
\ie the matrix $\eu^{\iu M}$ commutes with $A := W\Tr W$.  Thus also $\eu^{\iu M/2}$ commutes with $A$ (since each $\mu_j$ is in $[0, 2 \pi)$), hence one has
\[ U  U\Tr  = 
W \eu^{\iu M/2} A^{-1} \eu^{\iu M/2} A W^* = W \eu^{\iu M} W^* = V. \]
\end{proof}
In view of \eqref{Compatibility}, we have the following
\begin{crl} \label{Cor:U_factorization}
For every $k_\la \in V_d$ there exists a unitary matrix $U(k_\lambda)$ such that
$ U\sub{obs}(k_\lambda) = U(k_\lambda) U(k_\lambda)\Tr$.
In particular, the Bloch frame $\Phi(k_{\la}) = \Psi(k_{\la}) \act U(k_{\la})$ satisfies the vertex 
condition \eqref{VertexCondition} at the point $k_{\la} \in V_d$.
\end{crl}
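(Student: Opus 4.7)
The plan is to deduce the corollary directly from Lemma \ref{V->U} together with the compatibility condition \eqref{Compatibility}, and then to verify by a short computation that the resulting frame satisfies the vertex condition. There is essentially no new ingredient beyond assembling what has already been proved.

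First I would fix $k_\la \in V_d$ and note that the obstruction matrix $U\sub{obs}(k_\la) \in \U(\C^m)$, defined by \eqref{UobsVertices}, is symmetric, i.e.\ satisfies $U\sub{obs}(k_\la) = U\sub{obs}(k_\la)\Tr$; this is exactly \eqref{Compatibility}, which in turn follows from $\Theta^2 = \Id_{\Hi}$ and assumption \ref{item:TRtau}. Hence $V := U\sub{obs}(k_\la)$ satisfies the hypotheses of Lemma \ref{V->U}, and the Lemma produces a unitary $U(k_\la) \in \U(\C^m)$ with $U\sub{obs}(k_\la) = U(k_\la)\, U(k_\la)\Tr$. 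This is the first part of the statement.

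For the second part, I would define $\Phi(k_\la) := \Psi(k_\la) \act U(k_\la)$ and compute $\tau_\la \Theta \, \Phi(k_\la)$ using the already established identity $\Theta(\Psi \act U) = (\Theta \Psi) \act \overline{U}$, the definition \eqref{UobsVertices} of $U\sub{obs}(k_\la)$, and associativity of the right $\U(\C^m)$-action:
\[
\tau_\la \Theta \, \Phi(k_\la) = \bigl( \tau_\la \Theta\, \Psi(k_\la) \bigr) \act \overline{U(k_\la)} = \Psi(k_\la) \act \bigl( U\sub{obs}(k_\la)\, \overline{U(k_\la)} \bigr).
\]
Substituting $U\sub{obs}(k_\la) = U(k_\la)\, U(k_\la)\Tr$ and using $U(k_\la)\Tr \,\overline{U(k_\la)} = \id$, which holds for any unitary matrix, one obtains $\tau_\la \Theta\, \Phi(k_\la) = \Psi(k_\la) \act U(k_\la) = \Phi(k_\la)$, which is precisely the vertex condition \eqref{VertexCondition}.

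There is really no significant obstacle here: the symmetry $U\sub{obs} = U\sub{obs}\Tr$ is forced by the bosonic involutivity $\Theta^2 = \Id_{\Hi}$ together with \ref{item:TRtau}, and the factorization of a symmetric unitary as $U\, U\Tr$ has already been done in Lemma \ref{V->U}. The only subtle point worth flagging is the presence of the transpose (rather than the adjoint) in \eqref{Uobs-U}, which reflects the antilinearity of $\Theta$ through the rule $\Theta(\Phi \act U) = (\Theta \Phi) \act \overline{U}$; once this is kept in mind, the verification in the last paragraph is purely algebraic.
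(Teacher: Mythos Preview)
Your proposal is correct and follows the same approach as the paper: the first part is an immediate application of Lemma~\ref{V->U}, once \eqref{Compatibility} supplies the symmetry of $U\sub{obs}(k_\lambda)$, and the second part simply reverses the derivation of \eqref{Uobs-U}. The paper in fact compresses all of this into a single sentence before the corollary; your explicit verification of the vertex condition is just a spelled-out version of the same computation that led to \eqref{Uobs-U}.
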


\goodbreak

\subsection{Construction in the $1$-dimensional case}
\label{Sec:1_dimensional}

In the $1$-dimensional case, the boundary of $\Bred$ consists of two vertices $v_0 = 0$ and $v_1 = k_{e_1}$. 
Given, as an input, a {continuous} Bloch frame $\Psi: \Bred \to \Hi^m$,  
equation \eqref{UobsVertices} provides,  for each vertex, an obstruction matrix $U\sub{obs}(v_i)$. In view of Corollary 
\ref{Cor:U_factorization}, one obtains a unitary $U(v_i)$ which solves equation \eqref{Uobs-U} for $k_\la =v_i$, $i \in \set{0,1}$. 

Since $\U(\C^m)$ is a path-connected manifold, there exists a smooth path $W:[0, \half] \to \U(\C^m)$ 
such that  $W(0) = U(v_0)$ and $W(\half) = U(v_1)$. Moreover, the path $W$ can be explicitly constructed, as detailed in the following Remark.  

\begin{rmk}[Interpolation of unitaries] \label{Rem:Interpolation}
The problem of constructing a  smooth interpolation between two unitaries $U_1$ and $U_2$ in $\U(\C^m)$ has an easy explicit solution.  First, by left multiplication times $U_1^{-1}$, the problem is equivalent to the construction of a smooth interpolation between $\id$ and $U_1^{-1}U_2 =: U_{*} \in \U(\C^m)$. Since $U_{*}$ is normal, there exists a unitary matrix $S_*$ such that $S_* U_* S_*^{-1} = \eu^{\iu D}$, with $D = \diag(\delta_1, \ldots, \delta_m)$ a diagonal matrix. Then the map
$t \mapsto W(t) := S_*^{-1} \eu^{\iu 2t D} S_*$ is an explicit smooth interpolation between $W(0)= \id$ and $W(\half)= U_*$. 
\hfill $\lozenge$
\end{rmk}

We define a local continuous Bloch frame $\cell{\Phi}:  \Bred \to \Hi^m$ by setting 
$$
\cell{\Phi}(k) = \Psi(k) \act W(k),  \qquad \qquad  k \in \Bred.
$$
Notice that $\cell{\Phi}$ satisfies, in view of the construction above, the vertex conditions
\begin{equation} \label{Eq:1d_vertices}
\cell{\Phi}(0) = \Theta \, \cell{\Phi}(0)    \quad  \mbox{ and } \quad  \cell{\Phi}(k_{e_1}) = \tau_{e_1} \Theta \, \cell{\Phi}(k_{e_1}),
\end{equation}
which are  special cases of condition \eqref{VertexCondition}. 
We extend $\cell{\Phi}$ to a global Bloch frame $\Phi: \R^1 \to \Hi^m$ by using equation \eqref{Eq:Symmetric extension}. 
We claim that $\Phi$ is a \emph{continuous} symmetric Bloch frame. Indeed, it satisfies \ref{item:F2} and 
\ref{item:F3}  in view  of \eqref{Eq:Symmetric extension} and it is continuous since $\cell{\Phi}$ satisfies \eqref{Eq:1d_vertices}. 
On the other hand, $\Phi$  is in general non-smooth at the vertices in $V_1$. By using the symmetry-preserving smoothing procedure, as stated in Proposition \ref{Lem:SmoothingProcedure}, we obtain a global smooth symmetric Bloch frame $\Phi\sub{sm}$. This concludes the proof of Theorem \ref{Thm:SmoothBlochFrames} for $d=1$.  


\subsection{Construction in the $2$-dimensional case}
\label{Sec:2_dimensional}

The reduced unit cell $\Bred$ contains exactly six elements in $V_2$.  In adapted coordinates, so that 
$(k_1, k_2)$ represents the point $k_1 e_1 + k_2 e_2$ for $\Lambda = \mathrm{Span}_{\Z}\set{e_1, e_2}$, they are  labelled as follows (Figure \ref{fig:Bred}): 
\begin{equation} \label{vertices}
\begin{aligned}
v_1 = (0,0), \quad v_2 = \left( 0, - \frac{1}{2} \right), \quad v_3 = \left( \frac{1}{2}, - \frac{1}{2} \right), \\
v_4 = \left(\frac{1}{2} ,0 \right), \quad v_5 = \left( \frac{1}{2}, \frac{1}{2} \right), \quad v_6 = \left( 0, \frac{1}{2} \right).
\end{aligned}
\end{equation}
The oriented segment joining $v_i$ to $v_{i+1}$ (with $v_7 \equiv v_1$) is labelled by $E_i$. 


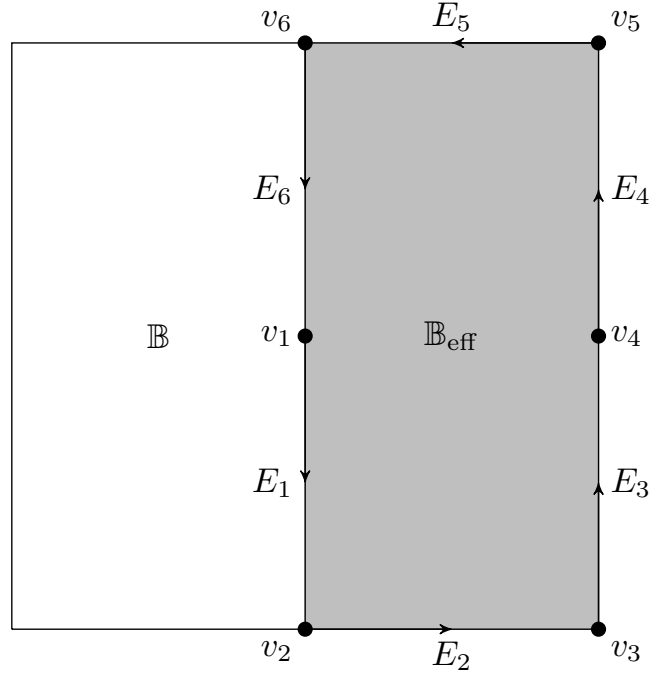
\begin{figure}[ht]
\centering
\scalebox{1.3}{%
\begin{footnotesize}
\begin{tikzpicture}[>=stealth']
\filldraw [lightgray] (0,-3) -- (3,-3) -- (3,3) -- (0,3) -- (0,-3);
\draw (-3,-3) -- (-3,3) -- (3,3) -- (3,-3) -- (-3,-3);
\draw (0,-3) -- (0,3);
\draw [->] (0,0) -- (0,-1.5) node [anchor=east] {$E_1$};
\draw [->] (0,-3) -- (1.5,-3) node [anchor=north] {$E_2$};
\draw [->] (3,-3) -- (3,-1.5) node [anchor=west] {$E_3$};
\draw [->] (3,0) -- (3,1.5) node [anchor=west] {$E_4$};
\draw [->] (3,3) -- (1.5,3) node [anchor=south] {$E_5$};
\draw [->] (0,3) -- (0,1.5) node [anchor=east] {$E_6$};
\filldraw [black] (0,0) circle (2pt) node [anchor=east] {$v_1$}
					(0,-3) circle (2pt) node [anchor=north east] {$v_2$}
					(3,-3) circle (2pt) node [anchor=north west] {$v_3$}
					(3,0) circle (2pt) node [anchor=west] {$v_4$}
					(3,3) circle (2pt) node [anchor=south west] {$v_5$}
					(0,3) circle (2pt) node [anchor=south east] {$v_6$};
\draw (1.5,0) node {$\Bred$}
      (-1.5,0) node {$\B$};
\end{tikzpicture}
\end{footnotesize}
} 
\caption{The effective unit cell (shaded area), its vertices and its edges. We use adapted coordinates $(k_1,k_2)$ such that $k = k_1 e_1 + k_2 e_2$.}
\label{fig:Bred}
\end{figure}


We start from a local continuous (resp.\ smooth) Bloch frame $\Psi: \Bred \to \Hi^m$.  Given $\Psi$, the obstruction matrix defined in \eqref{UobsVertices} yields, via Corollary \ref{Cor:U_factorization},  a unitary matrix $U(v_i)$ solving equation \eqref{Uobs-U}, for $i \in \set{1, \ldots, 4}$. 

\goodbreak

\subsubsection{Construction of the frame on the $1$-skeleton}
\label{Sec:1-skeleton}

As in the $1$-dimensional case, we exploit the constructive existence of a smooth path $W_i : [0, 1/2] \to \U(\C^m)$ such that 
$W_i(0)= U(v_i)$ and $W_i(\half) = U(v_{i+1})$, for  $i \in \set{1, 2,  3}$. These $\U(\C^m)$-valued paths are concatenated by setting  
\[ \widetilde{U}(k) := \begin{cases}
W_1(-k_2) & \text{if } k \in E_1, \\
W_2(k_1) & \text{if } k \in E_2, \\
W_3(k_2+1/2) & \text{if } k \in E_3,
\end{cases} \]
so to obtain a piecewise-smooth map $\widetilde{U} : E_1 \cup E_2 \cup E_3 \to \U(\C^m)$. Let 
$$
\widetilde{\Phi}(k) := \Psi(k) \act \widetilde{U}(k)  \qquad \mbox{ for } k \in E_1 \cup E_2 \cup E_3. 
$$
We extend the map $\widetilde{\Phi}$ to a continuous (resp.\  piecewise-smooth) symmetric Bloch frame $\widehat{\Phi}$ on $\partial \Bred$ by imposing properties  \ref{item:F2} and  \ref{item:F3},   \ie by setting
\begin{equation} \label{Phihat}
\widehat{\Phi}(k) := \begin{cases}
\widetilde{\Phi}(k) & \text{if } k \in E_1 \cup E_2 \cup E_3 \\
\tau_{e_1} \Theta \, \widetilde{\Phi}(t_{e_1}c (k)) & \text{if } k \in E_4 \\
\tau_{e_2} \, \widetilde{\Phi}(t^{-1}_{e_2}(k)) & \text{if } k \in E_5 \\
\Theta \, \widetilde{\Phi}(c(k)) & \text{if } k \in E_6.
\end{cases}
\end{equation}

By construction $\widehat{\Phi}$ satisfies all the \emph{edge symmetries} for a symmetric Bloch frame $\Phi$ listed below:
\begin{equation} \label{EdgeSymmetries}
\begin{aligned}
&\Phi(c(k)) = \Theta \, \Phi(k)  && \text{for } k \in E_1 \cup E_6 \\
&\Phi(t_{e_2}(k)) = \tau_{e_2} \Phi(k) && \text{for } k \in E_2\\
&\Phi(t_{e_1} c(k)) = \tau_{e_1} \Theta \, \Phi(k) \quad && \text{for } k \in E_3 \cup E_4\\
&\Phi(t^{-1}_{e_2}(k)) = \tau^{-1}_{e_2} \Phi(k) && \text{for } k \in E_5.
\end{aligned}
\end{equation}

The map $\widehat{\Phi}: \partial \Bred \to \Hi^m$ is {continuous} (resp.\ piecewise-smooth), since $\widetilde{\Phi}$ is continuous (resp.\ piecewise-smooth) and satisfies by construction the vertex conditions at $v_i$  for $i \in \set{1, 4}$.


\subsubsection{Extension to the $2$-torus }
\label{Sec:2-skeleton}

Since both $\widehat{\Phi}(k)$ and the input frame $\Psi(k)$ are orthonormal frames in $\Ran P(k)$, for every $k \in \partial \Bred$, 
there exists  a unique unitary matrix $\widehat{U}(k)$ such that
\begin{equation} \label{hatU}
\widehat{\Phi}(k) = \Psi(k) \act \widehat{U}(k) \qquad \text{ for } k \in \partial\Bred.
\end{equation}
Explicitly, $ \widehat{U}(k)_{ab} = \inner{\psi_a(k)}{\widehat \phi_b(k)}$, which also show that the map $\widehat{U}$ is {continuous} (resp.\ piecewise-smooth) on $\partial \Bred$. 

We look for a {continuous} extension $\cell{U} : \Bred \to \U(\C^m)$ of $\widehat{U}$, such that 
$\cell{\Phi} := \Psi \act \cell{U} $ satisfies the edge symmetries \eqref{EdgeSymmetries}. 
Noticing that $\partial \Bred$ is homeomorphic to a circle $S^1$, we use some well-known facts in algebraic topology: 
%
if $X$ is a topological space, then a continuous map $f  : S^1 \to X$   extends to a continuous map $F  :   D^2 \to X$, where $D^2$ is the $2$-dimensional disc enclosed by the circle $S^1$, if and only if its homotopy class $[f]$ is the trivial element in $\pi_1(X)$.  Since, in our case, the space $X$ is the group $\U(\C^m)$, we also use the fact that the exact sequence of groups
\[ 1 \longrightarrow \mathcal{S}\U(\C^m) \longrightarrow \U(\C^m) \xrightarrow{\det} U(1) \longrightarrow 1 \]
induces an isomorphism $\pi_1(\U(\C^m)) \simeq \pi_1(U(1))$. On the other hand, the degree homomorphism
\begin{equation} \label{deg} 
\deg : \pi_1(U(1)) \stackrel{\sim}{\longrightarrow} \Z, \quad [\varphi : S^1 \rightarrow U(1)] \longmapsto \frac{1}{2\pi \iu} \oint_{S^1} \,  \varphi(z)^{-1} \partial_z \varphi(z) \, \di z
\end{equation}
establishes an isomorphism of groups $\pi_1(U(1)) \simeq \Z$. 
We conclude that a continuous map $f  :   \partial \Bred \to \U(\C^m)$ can be continuously extended to $F  :   \Bred \to \U(\C^m)$ if and only if $\deg([\det f]) \in \Z$ is zero.


The following Lemma is the crucial step in the $2$-dimensional construction. It shows that, even if $\deg([\det \widehat U]) = r \neq 0$, it is always possible to construct a continuous map $X : \partial \Bred \to \U(\C^m)$ such that  $\deg ([\det \widehat U X]) = 0$ and $\widehat \Phi \act X$ still satisfies the edge symmetries.\footnote{This is a special feature of systems with \emph{bosonic} TR-symmetry: if assumption \ref{item:TRS} is replaced by \ref{item:fermionicTRS}, the analogous statement does not hold true \cite{FiMoPa}.} 

\begin{lemma}[Solution to the face-extension problem] \label{Lem:Xmatrix}
Let $r \in \Z$.  There exists a  piecewise-smooth map $X : \partial \Bred \to \U(\C^m)$ such that:
\begin{enumerate}[label=$(\mathrm{\roman*})$]
\item $\deg([\det X]) = - r$;
\item \label{item:ii} if a Bloch frame $\Phi$ satisfies the edge symmetries  \eqref{EdgeSymmetries}, the frame $\Phi \act X$ also does;
\item \label{item:iii} $X(k) \neq \id$ only for $k \in E_3 \cup E_4$.
\end{enumerate}
\end{lemma}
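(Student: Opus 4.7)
The plan is to construct $X$ concentrated on the self-dual arc $E_3 \cup E_4$, which is the unique edge of $\partial \Bred$ mapped to itself by the involution $t_{e_1} c \in G_2$.

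First, I would translate condition (ii) into pointwise constraints on $X$. Starting from a frame $\Phi$ satisfying the edge symmetries \eqref{EdgeSymmetries} and imposing the same for $\Phi \act X$, the antilinearity relation $\Theta (\Phi \act X) = (\Theta \Phi) \act \overline{X}$ together with the freeness of the $\U(\C^m)$-action yields, on each edge, either $X(g(k)) = X(k)$ (on the $\tau$-equivariance edges $E_2$ and $E_5$) or $X(g(k)) = \overline{X(k)}$ (on the time-reversal edges $E_1$, $E_6$, and the combined edge $E_3 \cup E_4$). On $E_1, E_2, E_5, E_6$, setting $X \equiv \id$ fulfills the constraint trivially and is compatible with (iii), so the only genuine requirement is on $E_3 \cup E_4$, where $t_{e_1} c$ acts in adapted coordinates by $(1/2, k_2) \mapsto (1/2, -k_2)$ and the surviving constraint reads
\[ X(1/2, -k_2) = \overline{X(1/2, k_2)}, \qquad k_2 \in [0, 1/2]. \]

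To meet (i) and (ii) simultaneously, I would set
\[ X(1/2, k_2) := \diag\!\left(\eu^{\iu \pi r (1 - 2k_2)}, 1, \ldots, 1\right) \qquad \text{for } k_2 \in [0, 1/2], \]
extend to $k_2 \in [-1/2, 0]$ by the complex-conjugation rule above, and declare $X \equiv \id$ on the remaining edges. Continuity at the endpoints $v_3 = (1/2, -1/2)$ and $v_5 = (1/2, 1/2)$ is automatic since $\eu^{\iu \pi r (1 - 2 \cdot 1/2)} = 1$; at the midpoint $v_4 = (1/2, 0)$ the two branches agree because $\eu^{\iu \pi r} = (-1)^r = \eu^{-\iu \pi r}$, giving the common value $\diag((-1)^r, 1, \ldots, 1)$. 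The resulting $X$ is piecewise smooth, and (ii)--(iii) hold by construction.

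Finally, I would verify (i) by computing the winding of $\det X$. Since $\det X \equiv 1$ off $E_3 \cup E_4$, a continuous lift of $\arg(\det X)$ starting at $0$ at $v_3$ reaches $-\pi r$ at $v_4$ along the $E_3$ branch and, after the compulsory $2\pi \Z$ correction to restore continuity at $v_4$, a further $-\pi r$ along the $E_4$ branch, ending at $-2\pi r$ at $v_5$. By \eqref{deg} this gives $\deg([\det X]) = -r$, as desired. The only delicate point is the arithmetic that forces the midpoint phase to be precisely $\pi r$: continuity at $v_4$ demands a phase in $\pi \Z$, while matching the prescribed winding fixes it to $\pi r$. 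This compatibility of the $\C$-antilinear constraint with any integer winding is a special feature of the \emph{bosonic} setting $\Theta^2 = +\id$, and reflects the failure of the analogous statement in the fermionic case.
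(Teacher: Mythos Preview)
Your proposal is correct and follows essentially the same approach as the paper: both constructions take $X=\diag(\xi,1,\ldots,1)$ supported on $E_3\cup E_4$, and in fact your piecewise formula $\xi(1/2,k_2)=\eu^{\iu\pi r(1-2k_2)}$ on $E_4$, extended by conjugation to $E_3$, coincides (since $\eu^{2\pi\iu r}=1$) with the paper's single expression $\xi(k)=\eu^{-\iu 2\pi r(k_2+1/2)}$ on all of $E_3\cup E_4$. The only cosmetic difference is that the paper writes one smooth formula across both edges and reads off $\deg[\xi]=-r$ directly, whereas you build $\xi$ from the symmetry and then track the argument lift through $v_4$; both routes are valid.
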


Property \ref{item:iii} will not be used in this Section, but it will be useful to solve the $3$-dimensional problem. 

\begin{proof}  First, we translate  (ii) into an explicit condition on $X$. 
For $k \in E_1 \cup E_6$, condition (ii) means that for every $\Phi$ such that $\Phi(-k) = \Theta \, \Phi(k)$ one has that
\begin{align*}
\(\Phi \act X \) (-k) & = \Theta \(\Phi \act X \)(k)  \\
& \Updownarrow \\
\Phi(-k) \act X(-k) & = \Theta \left( \Phi(k) \act X(k) \right) \\
& \Updownarrow \\
\( \Theta \, \Phi(k) \) \act X(-k) & = \( \Theta \, \Phi(k) \) \act \overline{X}(k),
\end{align*}
yielding the explicit condition 
\begin{equation} \label{E1E6}
X(-k) = \overline{X}(k),   \quad k \in E_1 \cup E_6.
\end{equation}
Similarly,  one obtains
\begin{eqnarray}     
\label{E3E4}     X(t_{e_{1}} c(k)) &=& \overline{X}(k)  \qquad \qquad  \mbox{for } k \in E_3 \cup E_4 \\
\label{E2}        X(t_{e_{2}}(k)) &=& X(k)  \qquad \qquad \mbox{for } k \in E_2 \\
\label{E5}        X(t^{{-1}}_{e_{2}}(k)) &=& X(k)  \qquad \qquad \mbox{for } k \in E_5.  
\end{eqnarray}
Thus  condition (ii) on $X$ is equivalent to the relations \eqref{E1E6}, \eqref{E3E4}, \eqref{E2} and \eqref{E5}. 


We now exhibit a map $X : \partial \Bred \to \U(\C^m)$ which satisfies the previous relations, and such that  $\deg([\det X]) = - r$.  Define  $\xi : \partial \Bred \to \C$ by 
\begin{equation} \label{Def:x_{map}}
\xi(k) :=  \begin{cases} 
\,\, \eu^{{- \iu 2 \pi r (k_{2} + \half) }}    & \qquad \text{for } k \in E_3 \cup E_4 \\
\,\, 1                                       & \qquad \text{otherwise},
\end{cases}
\end{equation}
and set $X(k) := \diag(\xi(k), 1, \ldots, 1) \in \U(\C^{m})$ for $k \in \partial \Bred$. The map $X$ is clearly 
 piecewise-smooth.  
Then, one easily checks that: 
\begin{enumerate}[label=$(\mathrm{\roman*})$]
\item $\deg([\det X]) = - r$, since
$
\deg([\det X]) =  \deg([\xi]) = -r.  
$
\item $X$ trivially satisfies relations \eqref{E1E6}, \eqref{E2} and \eqref{E5}, since $X(k) \equiv \id$ for $k \in E_1 \cup E_2 \cup E_5 \cup E_6$.  It also satisfies  relation \eqref{E3E4}. Indeed, let $k = (\half, k_2) \in E_3 \cup E_4$.  Since 
$t_{e_1} c (k) = (\half, - k_2)$, one has 
\begin{eqnarray*} 
X(t_{e_1}c(k)) = X(\half, - k_2) &=&  \diag(\xi(\half, - k_2), 1, \ldots, 1) \\
&=&  \diag(\overline{\xi(\half, k_2)}, 1, \ldots, 1) = \overline{X}(k).  
\end{eqnarray*}
\item property (iii) is satisfied by construction. 
\end{enumerate}
\end{proof}

Set $r := \deg([\det \widehat{U}])$. In view of Lemma \ref{Lem:Xmatrix},  the {continuous}  (resp.\ piecewise-smooth) map $U := \widehat{U} X : \partial \Bred \to \U(\C^m)$ satisfies $\deg([\det U]) = 0$ and hence extends to a {continuous} (resp.\ piecewise-smooth) map $\cell{U} : \Bred \to \U(\C^m)$. 
Moreover, the extension procedure is explicitely constructive whenever $U$ is piecewise-smooth,  as detailed in Remark \ref{Rem:ExplicitFilling 2d}.
By setting  $\cell{\Phi}(k) := \Psi(k) \act \cell{U}(k)$, we obtain a  {continuous} symmetric Bloch frame on the whole reduced unit cell $\Bred$,  which moreover satisfies the edge symmetries \eqref{EdgeSymmetries} in view of item \ref{item:ii} in Lemma \ref{Lem:Xmatrix}. 
Then formula \eqref{Eq:Symmetric extension} defines a global symmetric Bloch frame $\Phi$, which is {continuous} in view of the fact that $\cell{\Phi}$ satisfies \eqref{EdgeSymmetries}. The symmetry-preserving smoothing procedure (Proposition \ref{Lem:SmoothingProcedure}) yields a global {smooth symmetric} Bloch frame, arbitrarily close to $\Phi$. 
This concludes the proof of Theorem \ref{Thm:SmoothBlochFrames} for $d=2$.  


\begin{rmk}[Explicit extension to the whole effective cell, $d=2$] \label{Rem:ExplicitFilling 2d}
We emphasize that the extension of the \emph{piecewise-smooth} map $U :  \partial \Bred \to \U(\C^m)$, with $\deg [\det U] = 0$, to a continuous (actually, piecewise-smooth) map $\cell{U} : \Bred \to \U(\C^m)$ is explicit. For notational convenience, 
we use the shorthand  $\U(m) \equiv \U(\C^m)$. 

First notice that the problem of constructing a continuous extension of $U$ can be decomposed into two simpler problems, 
since $\U(m) \approx \U(1) \times \mathcal{SU}(m)$ (as topological spaces), where the identification is provided \eg 
by the map
$$
W \mapsto (\det W, W^{\flat}) \in \U(1) \times \mathcal{SU}(m) \qquad \text{ with } W^{\flat} = \diag(\det W^{-1}, 1, \ldots,1) W. 
$$
Thus the problem is reduced to exhibit a continuous extension of 
(a) the map $f : k \mapsto \det U(k) \in \U(1)$, and (b) the map  $f^{\flat} : k \mapsto {U}^{\flat}(k) \in \mathcal{SU}(m)$.

\medskip

As for problem (a),  let $f:\partial\Bred\to \U(1)$ be a degree-zero piecewise-smooth function. 
Then, a  piecewise-smooth extension $F:\Bred\to \U(1)$ is constructed as follows. 
Let $\theta_0\in \mathbb{R}$ be such that $f(0,-1/2)=\eu^{\iu 2\pi  \theta_0}$. Define the piecewise-smooth function $\varphi:[0,3]\to \U(1)$ as

\begin{equation} \label{Eq:Phase}
\varphi(t)=
\begin{cases}
f(t, - 1/2)& \text{if $0\leq t\leq 1/2$}\\
f(1/2, - 1 + t)& \text{if $1/2\leq t\leq 3/2$}\\
f(2 - t, 1/2)& \text{if $3/2\leq t\leq 2$}\\
f(0, 5/2 - t)& \text{if $2\leq t\leq3$}\\
\end{cases}
\end{equation}
and set
\[
\theta(t)=\theta_0+\frac{1}{2\pi \iu}\int_0^t\varphi(\tau)^{-1}\varphi'(\tau)\di \tau,    \qquad \text{ for } t \in[0,3].
\]

\medskip

\noindent By the Cauchy integral formula, $\theta(3)=\theta(0)+\deg(f)=\theta(0)=\theta_0$. Moreover,
\[
\eu^{\iu 2\pi \theta(t)}=\varphi(t)       
\]
for every $t\in [0,3]$. Then, one can choose $F(k_1, k_2 )=\eu^{\iu 2\pi \omega(k_1,k_2 )}$, where

\[
\omega(k_1,k_2 ) :=
\begin{cases}
-2 k_2  \,\,\theta\(\frac{k_2  -2 k_1+1/2}{4 k_2 }\)&\text{if $k_2  \leq -|2 k_1 -1/2|$},\\ \\
(4 k_1 -1)\, \theta\(1+ \frac{k_2 }{4 k_1 -1}\)&\text{if $-|2 k_1 -1/2|\leq k_2 \leq  |2 k_1 -1/2|$, with $k_1 \geq 1/4$},\\ \\
2 k_2  \,\,\theta\( 2 - \frac{k_2  +  2 k_1 -1/2}{4 k_2 }\)&\text{if $k_2  \geq |2 k_1 -1/2|$},\\ \\
(-4 k_1+1)\, \theta\(\frac{5}{2} - \frac{k_2 }{-4 k_1+1}\)&\text{if $-|2 k_1 -1/2|\leq y\leq |2 k_1 -1/2|$, with $k_1 \leq 1/4$}.
\end{cases}
\]

\noindent Note that $\omega$ is continuous at $(1/4,0)$ with $\omega(1/4,0)=0$, since $\theta$ is continuous on $[0,3]$ and so there exist a $|\theta|_{\mathrm{max}}\in \mathbb{R}$ such that $|\theta(t)|\leq |\theta|_{\mathrm{max}}$ for any $t\in [0,3]$.

\medskip 

As for problem (b), while a construction of the continuous extension is possible for every $m \in \N$, here we provide the details
only for $m=2$, which is the case of interest for the $2$-bands models, as \eg  the celebrated Haldane model \cite{Haldane88}, and is such that an extension can be made completely explicit by elementary techniques. To obtain an extension for higher $m$'s, one can reduce to the case $m=2$  by recursively exploiting the fibrations  $\mathcal{SU}(m-1) \longrightarrow \mathcal{SU}(m)  \longrightarrow S^{2m-1}$.

Let  $f^{\flat} : k \mapsto {U}^{\flat}(k) \in \mathcal{SU}(2)$  be a  piecewise-smooth function. Then a  piecewise-smooth extension $F^{\flat} : \Bred\to \mathcal{SU}(2)$  is constructed as follows. First, we use the standard identification of $\mathcal{SU}(2)$ with the 3-sphere of unit norm vectors in $\mathbb{R}^4$ to look at $f^{\flat}$ as to a piecewise-smooth function $f^{\flat}  :  \partial\Bred\to S^3$.  Let $p\in S^3$ be a point not in the range
\footnote{Such a point does exists since the map $f^{\flat}$ is piecewise-smooth. Indeed,  by an argument analogous to the Sard lemma, one can show that the range of  $f^{\flat}$ is not dense in $S^3$. This is the only point in the construction where we 
need $U$ piecewise-smooth, and hence a smooth input frame $\Psi$. 
\label{foot:Sard} } 
of $f^{\flat}$, and let $\psi_p:S^3\setminus\{p\}\to \set{p}^\perp$ be the stereographic projection from $p$ to the hyperplane through the origin of $\mathbb{R}^4$ orthogonal to the vector $p$. Explicitly, this map and its inverse read 
\begin{equation} \label{Def:stereographic?}
\begin{aligned} 
\psi_p(v)  &=&  p     &-\frac{1}{\langle v- p \, | \, p\rangle } \,\, (v-p) \\
\psi_p^{-1}(w)&=&p&+\frac{2}{||w-p||^2} \,\,(w-p)\,.
\end{aligned}
\end{equation}

\noindent Second, we define a piecewise-smooth function $\varphi ^{\flat} :  [0,3]\to S^3$ by using the same formula as in \eqref{Eq:Phase}, with $f$ replaced by  $f^{\flat}$. 

\noindent Then, a piecewise-smooth extension of $f^{\flat}$ to a function $F^{\flat} : \Bred \to S^3$ is 
 given by

$$
F^{\flat} (k_1,k_2)=
\begin{cases}
\psi_{p}^{-1}\(-2k_2\,\psi_p(\varphi(\frac{k_2-2k_1+1/2}{4k_2}))\)&\text{if $k_2\leq -|2k_1-1/2|$},\\ \\
\psi_{p}^{-1}\((4k_1-1)\,\psi_p(\varphi(1+ \frac{k_2}{4k_1-1}))\)&\text{if $-|2k_1-1/2|\leq k_2\leq  |2k_1-1/2|$ and $k_1\geq 1/4$},\\ \\
\psi_{p}^{-1}\(2k_2\,\,\psi_p(\varphi( 2 - \frac{k_2+2k_1-1/2}{4k_2}))\)&\text{if $k_2\geq |2k_1-1/2|$},\\ \\
\psi_{p}^{-1}\((-4k_1+1)\,\psi_p(\varphi(\frac{5}{2} - \frac{k_2}{-4k_1+1}))\)&\text{if $-|2k_1-1/2|\leq k_2\leq |2k_1-1/2|$ and  $k_1\leq 1/4$}.
\end{cases}
$$

\medskip

\noindent Notice that $F^{\flat} $ is continuous at $(1/4,0)$ with $F^{\flat} (1/4,0)=- p$, since $\psi_p\circ\varphi:[0,3]\to p^\perp\subseteq \mathbb{R}^4$ is continuous on $[0,3]$.  This provides an explicit piecewise-smooth 
extension of $f^{\flat}$ for the case $m=2$.  
\hfill $\lozenge$
\end{rmk}

\newpage

\subsection{Interlude: abstracting from the $1$- and $2$-dimensional case}
\label{Sec:interlude} 

Abstracting from the proofs in Subsections \ref{Sec:1_dimensional} and \ref{Sec:2_dimensional}, 
we distill two Lemmas which will become the \virg{building bricks} of the higher dimensional construction. 
To streamline the statements, we denote by $\B^{(d)}$ (resp.\ $\Bred^{(d)}$) the $d$-dimensional unit cell 
(resp.\ effective unit cell) and we adhere to the following convention:
\begin{align}  \label{?Cell conventions}
\quad \qquad & \B^{(0)} \simeq \set{0},   \qquad \qquad  \Bred^{(1)} \simeq [0, \half],  \nonumber \\[-2mm]
\\[-2mm]
\quad \qquad & \Bred^{(d+1)} \simeq  \Bred^{(1)}  \times \B^{(d)}  = \big\{ (k_1, \underbrace{k_2, \ldots, k_{d+1}}_{k_{\perp}}) : \,\, k_1 \in [0, \half] ,  \,\, k_{\perp} \in \B^{(d)} \big\}  \qquad  \text{ for } d \geq 1. \nonumber
\end{align}
We also refer to the following statement as the \emph{\bf $d$-dimensional problem}: \newline
\emph{Given a continuous Bloch frame  $\Psi : \Bred^{(d)} \to \Hi^m$,  construct a continuous Bloch frame  
$
\cell{\Phi} : \Bred^{(d)} \to \Hi^m
$ 
which, via \eqref{Eq:Symmetric extension}, continuously  extends to a global continuous  symmetric Bloch 
frame $\Phi : \R^d \to \Hi^m$}. \newline
In other words, $\cell{\Phi}$ is defined only on the effective unit cell $\Bred^{(d)}$, but satisfies all the relations on $\partial \Bred^{(d)}$ (involving vertices, edges, faces, \ldots) which allow for a continuous symmetric extension to the whole $\R^d$. Hereafter, we will not further emphasize the fact that all the functions appearing in the construction are piecewise-smooth 
whenever $\Psi$ is smooth, since this fact will be used only in Remark \ref{Rem:ExplicitFilling 3d}. 

\smallskip

Notice that Subsection \ref{Sec:VertexConditions} already contains a solution to the $0$-dimensional problem:  indeed,  in view of Corollary \ref{Cor:U_factorization},  for every $\la \in \Lambda$   there exists a Bloch frame, defined on the point $k_{\la}$, satisfying the vertex condition \eqref{VertexCondition}, thus providing a solution to the $0$-dimensional problem in  $\B^{(0)} \simeq \set{k_{\la}} \simeq \set{0}$. 

\smallskip

A second look to Subsection \ref{Sec:1_dimensional} shows that it contains a solution to the $1$-dimensional problem, given a solution to the $0$-dimensional problem.  Indeed, one extracts from the construction the following Lemma. 

\begin{lemma}[Macro 1] \label{Lem:Macro1}
Let $\Phi\uno: \B^{(0)} \simeq \set{0}  \to \Hi^m$ be a Bloch frame satisfying 
\begin{equation} \label{Eq:0_dimensional}
\Phi\uno(0) =  \Theta \, \Phi\uno(0).
\end{equation}
Then one constructs a continuous Bloch frame $\Phi \due: \Bred^{(1)} \simeq [0, \half] \to \Hi^m$ such that
\begin{equation} \label{Eq:Sym_v1}
\begin{cases}
\Phi\due(0)  = \Phi\uno(0)  \\
\Phi \due(\half)  =  \tau_{e_2} \Theta \, \Phi\due (\half).    
\end{cases} \end{equation}
\end{lemma}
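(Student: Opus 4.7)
The plan is to mimic the one-dimensional construction of Subsection \ref{Sec:1_dimensional}, but now with the extra constraint that the frame at the left endpoint must match the prescribed datum $\Phi\uno(0)$, rather than being freely chosen.

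First, I would pick any continuous local Bloch frame $\Psi : [0, \half] \to \Hi^m$. Such a $\Psi$ exists because $[0, \half]$ is contractible and no boundary conditions are yet imposed; for instance, it may be produced by the Kato--Nagy intertwiner along the segment. Next, I deal with the right endpoint $k = \half$: since both $\Psi(\half)$ and $\tau_{e_2}\Theta\,\Psi(\half)$ are orthonormal frames of $\Ran P(\half)$ (using $\tau$-covariance and time-reversal invariance of $\mathcal{P}$ at the fixed point $\half = t_{e_2}c(\half)$), there is a unique $U\sub{obs}(\half) \in \U(\C^m)$ with
\[
\Psi(\half) \act U\sub{obs}(\half) = \tau_{e_2} \Theta \, \Psi(\half).
\]
Repeating the computation of Subsection \ref{Sec:VertexConditions} verbatim, assumption \ref{item:TRtau} and $\Theta^2 = \Id$ force $U\sub{obs}(\half)\Tr = U\sub{obs}(\half)$, so Lemma \ref{V->U} supplies a unitary $U(\half) \in \U(\C^m)$ with $U\sub{obs}(\half) = U(\half)\,U(\half)\Tr$. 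By Corollary \ref{Cor:U_factorization}, the frame $\Psi(\half) \act U(\half)$ then satisfies $\Phi\due(\half) = \tau_{e_2}\Theta\,\Phi\due(\half)$, which is precisely the second line of \eqref{Eq:Sym_v1}.

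For the left endpoint, the datum $\Phi\uno(0)$ and the frame $\Psi(0)$ are two orthonormal bases of $\Ran P(0)$, so there exists a unique $U(0) \in \U(\C^m)$ with $\Phi\uno(0) = \Psi(0) \act U(0)$. I now interpolate: since $\U(\C^m)$ is path-connected (indeed, the explicit spectral construction of Remark \ref{Rem:Interpolation} applies to $U(0)^{-1}U(\half)$), I obtain a continuous path $W : [0, \half] \to \U(\C^m)$ with $W(0) = U(0)$ and $W(\half) = U(\half)$. Setting
\[
\Phi\due(k) := \Psi(k) \act W(k), \qquad k \in [0, \half],
\]
yields a continuous Bloch frame on $[0, \half]$ which by construction satisfies $\Phi\due(0) = \Psi(0) \act U(0) = \Phi\uno(0)$ and $\Phi\due(\half) = \Psi(\half) \act U(\half)$, fulfilling both requirements in \eqref{Eq:Sym_v1}.

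The only non-trivial step is the solvability of the vertex equation $U\sub{obs}(\half) = U\,U\Tr$, which relies on the compatibility identity $U\sub{obs}(\half)\Tr = U\sub{obs}(\half)$ (a direct consequence of $\Theta^2 = \Id$) and on the spectral construction of Lemma \ref{V->U}; the interpolation step is then routine in view of the path-connectedness of $\U(\C^m)$. Note that no compatibility condition is imposed at $k = 0$ between $\Phi\uno$ and the vertex equation there, because $\Phi\uno(0)$ is already assumed in \eqref{Eq:0_dimensional} to be time-reversal invariant; this is exactly what allows the datum to serve as the initial value of the interpolation without creating an obstruction.
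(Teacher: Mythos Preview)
Your proof is correct and follows essentially the same route as the paper's own argument: the paper does not give a separate proof of Macro~1, but rather states that it is ``extracted'' from the $1$-dimensional construction of Subsection~\ref{Sec:1_dimensional}, and your write-up is precisely that extraction. The only adaptation, which you handle correctly, is that the unitary at the left endpoint is determined by the matching condition $\Psi(0)\act U(0)=\Phi\uno(0)$ rather than by solving the vertex equation there, since $\Phi\uno(0)$ already satisfies \eqref{Eq:0_dimensional}; the right endpoint is handled exactly as in Subsection~\ref{Sec:VertexConditions}, and the interpolation step via Remark~\ref{Rem:Interpolation} is identical.
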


In view of \eqref{Eq:Sym_v1},  ${\Phi}\due$  continuously extends, via \eqref{Eq:Symmetric extension}, to a global continuous symmetric Bloch frame, thus providing a solution to the $1$-dimensional problem.

\smallskip

Analogously, from the construction in  Subsection \ref{Sec:2_dimensional}  we distill a general procedure. For convenience, 
we relabel the edges of $\Bred^{(2)}$ as follows: 
\begin{eqnarray} \label{Def:Edges}
E_{j,0} &=& \set{k = \sum_i k_i e_i \in \Bred^{(2)} :  k_j = 0} \\              
E_{j, \pm} &=& \set{k = \sum_i k_i e_i  \in \Bred^{(2)} :  k_j = \pm \half}    
\end{eqnarray}
From the construction in Subsection  \ref{Sec:2_dimensional}, based on Lemma \ref{Lem:Xmatrix},  one easily deduces the following result. 
\begin{lemma}[Macro 2]\label{Lem:Macro2} 
Assume that   $\Phi_S:  S  \to \Hi^m$,  with $S = E_{1,0} \cup E_{2,-} \cup E_{2,+} \subset \partial \Bred^{(2)}$,  
is continuous and satisfies the following edge symmetries: 
\begin{equation}  \label{Eq: S conditions}
\begin{cases}
\Phi_S(t_{e_2}(k)) =  \tau_{e_2} \Phi_S(k)                      & \qquad \text{ for } k \in E_{2,-} \\
\Phi_S(t_{e_2}^{-1}(k)) =  \tau^{-1}_{e_2} \Phi_S(k)       & \qquad  \text{ for } k \in E_{2,+} \\
\Phi_S(c(k)) =  \Theta \, \Phi_S(k)                                      & \qquad  \text{ for } k \in E_{1,0}.
\end{cases}
\end{equation}
Then one constructs a continuous Bloch frame $\cell{\Phi}: \Bred^{(2)} \to \Fr(m, \Hi)$ such that
\begin{equation} \label{Eq: S complement}
\begin{cases}
\cell{\Phi}(k) = \Phi_S(k)                                                         & \text{ for } k \in S \\
\cell{\Phi}(t_{e_1} c(k) ) = \tau_{e_1} \Theta \, \cell{\Phi}(k)      & \text{ for }  k \in E_{1,+}.
\end{cases}
\end{equation}
\end{lemma}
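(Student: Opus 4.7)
The strategy is to retrace the construction of Subsection \ref{Sec:2_dimensional}, with the simplification that three sides of the boundary are already prescribed by $\Phi_S$, so only the missing edge $E_{1,+}$ and then the interior of $\Bred^{(2)}$ need to be filled in. This splits naturally into two phases: (i) extend $\Phi_S$ to a continuous frame $\widehat{\Phi}$ on the full $1$-skeleton $\partial \Bred^{(2)}$ satisfying all the edge symmetries \eqref{EdgeSymmetries}, and (ii) apply the winding-number obstruction/correction argument of Subsection \ref{Sec:2-skeleton} to extend $\widehat{\Phi}$ to the interior without altering $\Phi_S$ on $S$.

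For phase (i), fix any continuous local input frame $\Psi : \Bred^{(2)} \to \Hi^m$ (it exists by contractibility of $\Bred^{(2)}$ together with \ref{item:smooth}). The only TRIM of $\partial\Bred^{(2)}$ not contained in $S$ is $v_4 = (1/2,0)$. At $v_4$, Corollary \ref{Cor:U_factorization} produces a unitary $U(v_4) \in \U(\C^m)$ such that $\Psi(v_4) \act U(v_4)$ satisfies the vertex condition $\tau_{e_1}\Theta \,\widehat{\Phi}(v_4) = \widehat{\Phi}(v_4)$. Let $U(v_5) \in \U(\C^m)$ be the unique unitary with $\Psi(v_5) \act U(v_5) = \Phi_S(v_5)$. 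Using the explicit interpolation of Remark \ref{Rem:Interpolation}, build a continuous path $W : [0,1/2] \to \U(\C^m)$ with $W(0) = U(v_4)$ and $W(1/2) = U(v_5)$, and set $\widehat{\Phi}(1/2, k_2) := \Psi(1/2, k_2) \act W(k_2)$ on the upper half-edge. Define $\widehat{\Phi}$ on the lower half-edge of $E_{1,+}$ by the prescribed symmetry $\widehat{\Phi}(k) := \tau_{e_1}\Theta \,\widehat{\Phi}(t_{e_1}c(k))$. Continuity at $v_4$ is granted by the vertex condition there, while the match with $\Phi_S(v_3)$ at the corner $v_3$ reduces, via the edge symmetry on $E_{2,-}$ and property \ref{item:TRtau}, to the vertex condition at $v_5$; this condition is implicit in the hypotheses (and is automatically satisfied in the contexts where Macro 2 is invoked in Section \ref{Sec:Algorithm}).

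For phase (ii), write $\widehat{\Phi}(k) = \Psi(k) \act \widehat{U}(k)$ on $\partial\Bred^{(2)}$ and set $r := \deg([\det \widehat{U}]) \in \Z$. Invoke Lemma \ref{Lem:Xmatrix} to obtain a continuous $X : \partial\Bred^{(2)} \to \U(\C^m)$ with $\deg([\det X]) = -r$, preserving the edge symmetries \eqref{EdgeSymmetries}, and --- crucially --- supported on $E_{1,+}$ by item \ref{item:iii}. This last property guarantees $X|_S = \id$, so that the corrected boundary frame $\Psi \act \widehat{U}X$ still restricts to $\Phi_S$ on $S$. Because $\det(\widehat{U}X)$ has zero winding, the map $\widehat{U}X$ extends continuously to $\cell{U} : \Bred^{(2)} \to \U(\C^m)$ (explicitly, as in Remark \ref{Rem:ExplicitFilling 2d}). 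Then $\cell{\Phi} := \Psi \act \cell{U}$ is continuous on $\Bred^{(2)}$, coincides with $\Phi_S$ on $S$, and satisfies the edge symmetry on $E_{1,+}$ required by \eqref{Eq: S complement}.

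The main difficulty is that $\Phi_S$ cannot be modified in order to kill the winding-number obstruction; the correction must be \emph{localized} to $E_{1,+}$. This is precisely the content of property \ref{item:iii} of Lemma \ref{Lem:Xmatrix}, and is what allows the bosonic algorithm to proceed where the fermionic one would face a $\Z_2$ obstruction. A secondary bookkeeping point is the consistency between the symmetry-extension on $E_{1,+}$ and the prescribed values $\Phi_S(v_3), \Phi_S(v_5)$ at the corner TRIMs, which reduces to a vertex condition at $v_5$ as explained above.
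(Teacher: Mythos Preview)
Your proposal is correct and matches the paper's own approach, which simply states that Macro~2 is distilled from the $2$-dimensional construction of Subsection~\ref{Sec:2_dimensional} via property~\ref{item:iii} of Lemma~\ref{Lem:Xmatrix}. Your observation that the corner-matching at $v_3$ reduces to the vertex condition at $v_5$ --- a compatibility requirement not strictly implied by \eqref{Eq: S conditions} alone but forced by the conclusion \eqref{Eq: S complement} and satisfied in every application --- is a genuine subtlety that the paper glosses over.
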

To obtain \eqref{Eq: S complement}  we implicitly used property \ref{item:iii} in Lemma \ref{Lem:Xmatrix}, 
which guarantees that it is possible to obtain the frame $\cell{\Phi}$ by acting only on the edge $E_{1,+} = E_3 \cup E_4$.   
Notice that, in view of \eqref{Eq: S complement},  $\cell{\Phi}$  continuously extends, via \eqref{Eq:Symmetric extension}, to a global continuous symmetric Bloch frame.  Therefore, a solution to the $2$-dimensional problem can always be constructed, whenever a continuous Bloch frame on the $1$-dimensional set $S$, satisfying the edge symmetries \eqref{Eq: S conditions}, is provided. 

\medskip

The previous Lemmas \ref{Lem:Macro1} and \ref{Lem:Macro2} will yield a constructive and conceptually clear solution to the $3$-dimensional problem, and a characterization of the obstruction to the solution to the $4$-dimensional problem.   


\subsection{Construction in the $3$-dimensional case}
\label{Sec:3_dimensional}

The faces of $\partial \Bred^{(3)}$ are labelled according to the following convention :
for  $j \in \set{1, 2, 3}$ we set 
\begin{eqnarray} \label{Def:Faces}
F_{j,0} &=& \set{k = \sum_{i=1}^3 k_i e_i :  k_j = 0} \\              
F_{j, \pm} &=& \set{k = \sum_{i=1}^3 k_i e_i :  k_j = \pm \half}    
\end{eqnarray}
Notice that two faces of $\partial \Bred^{(3)}$, namely $F_{1,0}$ and $F_{1,+}$ are identifiable with a $2$-dimensional unit cell $\B^{(2)}$,  while the remaining four faces, namely  $F_{2, \pm}$ and $F_{3,\pm}$, are identifiable with a   $2$-dimensional effective  unit cell $\Bred^{(2)}$. 


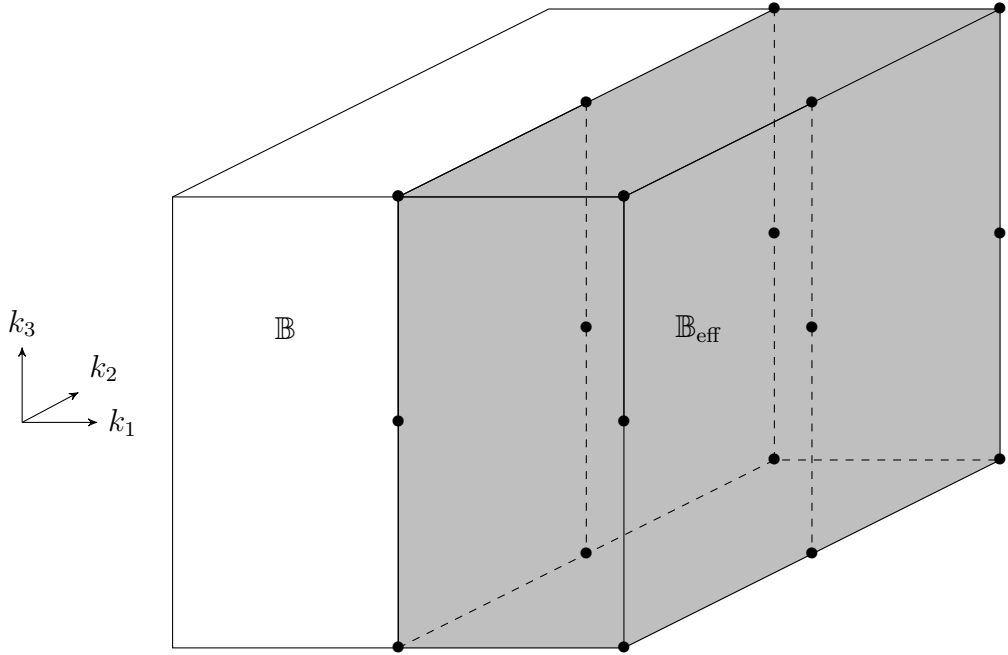
\begin{figure}[ht]
\centering
\scalebox{1}{%
\begin{tikzpicture}[>=stealth']
\filldraw [lightgray] (0,3) -- (0,-3) -- (3,-3) -- (8,-0.5) -- (8,5.5) -- (5,5.5) -- (0,3);
\draw (0,3) -- (0,-3) -- (-3,-3) -- (-3,3) -- (0,3);
\draw (-3,3) -- (2,5.5) -- (5,5.5);
\draw (0,3) node {$\bullet$}
	   -- (0,-3) node {$\bullet$}
	   -- (3,-3) node {$\bullet$}
	   -- (3,3) node {$\bullet$}
	   -- (0,3) node {$\bullet$}
	   -- (5,5.5) node {$\bullet$}
	   -- (8,5.5) node {$\bullet$}
	   -- (3,3) node {$\bullet$};
\draw (3,-3)
	   -- (8,-0.5) node {$\bullet$}
	   -- (8,5.5);
\draw [dashed] (0,-3)
				 -- (5,-0.5) node {$\bullet$}
				 -- (8,-0.5);
\draw [dashed] (5,-0.5) -- (5,5.5);
\draw [dashed] (5.5,4.25) node {$\bullet$} -- (5.5,1.25) node {$\bullet$};
\draw [dashed] (5.5,1.25) -- (5.5,-1.75) node {$\bullet$};
\draw [dashed] (2.5,4.25) node {$\bullet$} 
						 -- (2.5,1.25) node {$\bullet$};
\draw [dashed] (2.5,1.25) -- (2.5,-1.75) node {$\bullet$};
\draw (2.5,4.25) -- (0,3) -- (3,3) -- (5.5,4.25);
\draw (0,3) -- (0,0) node {$\bullet$};
\draw (3,3) -- (3,0) node {$\bullet$};
\draw (8,2.5) node {$\bullet$}
	  (5,2.5) node {$\bullet$}
	  (4,1.25) node {$\Bred$}
	   (-1.5,1.25) node {$\B$};
\draw [->] (-5,0) -- (-4,0) node [anchor=west] {$k_1$};
\draw [->] (-5,0) -- (-4.25,.4) node [anchor=south west] {$k_2$};
\draw [->] (-5,0) -- (-5,1) node [anchor=south] {$k_3$};
\end{tikzpicture}
}
\caption{The $3$-dimensional (effective) unit cell.}
\label{fig:Bred3}
\end{figure}


We assume as given a continuous Bloch frame $\Psi: \Bred^{(3)} \to \Hi^m$ (the \emph{input frame}) which does not satisfy 
any particular symmetry on the boundary $\partial \Bred^{(3)}$.  Since $F_{1,0} \simeq \B^{(2)}$, in view of the construction in Subsection \ref{Sec:2_dimensional} we can assume that $\Psi$ has been already modified to obtain a continuous Bloch frame 
$\Phi\uno : F_{1,0} \to \Hi^m$, $\Phi\uno = \Psi \act U\uno$, which satisfies the edge symmetries \eqref{EdgeSymmetries} on $F_{1,0}$. 

For convenience, we decompose the constructive algorithm into few steps: 

\textsl{$\bullet$ Step 1. Extend to an edge by Macro 1.} Choose a vertex of  $\partial \Bred^{(3)}$ contained in $F_{1,+}$, and let $v_0$ be the corresponding vertex on $F_{1,0}$. For the sake of concreteness, we choose $v_{*} = (\half,\half,\half)$, so that   $v_{0} = (0,\half,\half)$.  Then Lemma \ref{Lem:Macro1} (Macro 1) yields the existence of a continuous Bloch frame $ \Phi\due :  [v_0, v_*] \simeq  \Bred^{(1)}  \to \Hi^m$ such that 
$\Phi\due(v_0) = \Phi\uno(v_0)$ and $\Phi\due(v_*)$ satisfies the vertex condition \eqref{VertexCondition} at $v_*$. 

\textsl{$\bullet$ Step  2. Extend by $\tau$-equivariance.} By imposing property $\mathrm{(F_2)}$,  $\Phi\due$ naturally extends to the edges $t^{-1}_{e_2}([v_0,v_*])$ and 
$t^{-1}_{e_3}([v_0,v_*])$. Since $\Phi\uno$ is $\tau$-equivariant on $F_{1,0} \simeq \B^{(2)}$, one has that  
$\Phi\due(t^{-1}_{e_j}(v_0)) = \Phi\uno(t^{-1}_{e_j}(v_0))$ for $j \in \set{2,3}$.  
In view of that, we obtain a continuous Bloch frame  by setting
\begin{equation} 
\Phi\tre(k) :=  \begin{cases}
\Phi\uno(k)        & \text{ for } k \in F_{1,0}, \\
\Phi\due(k)        & \text{ for } k \in t_{\lambda}([v_0,v_*])  \text{ for } \lambda \in \set{0, -e_2, -e_3}. 
\end{cases}
\end{equation}

\textsl{$\bullet$ Step 3. Extend to small faces by Macro 2.} Notice that $\Phi\tre$ restricted to $F_{3,+}$ (resp.\ $F_{2,+}$) is defined and continuous on a set $S_{3,+}$ (resp.\  $S_{2,+}$) which has the same structure as the set $S$ appearing in Lemma \ref{Lem:Macro2} (Macro 2), and there satisfies the relations analogous to  \eqref{Eq: S conditions}. Then $\Phi\tre$ continuously extends to the whole $F_{3,+}$ (resp.\ $F_{2,+}$) and the extension  satisfies the relation analogous to \eqref{Eq: S complement} on the edge $\partial F_{j,+} \setminus S_{j,+}$  for $j =3$ (resp.\ $j =2$). 


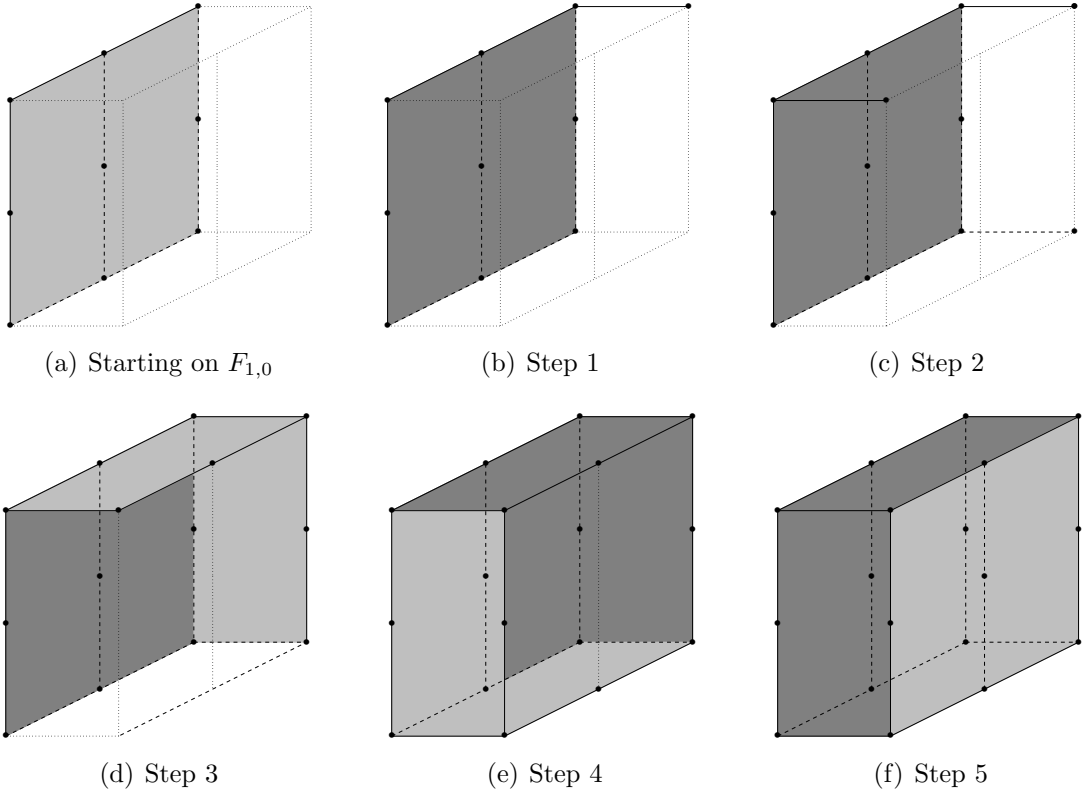
\begin{figure}[ht]
\centering
\subfigure[Starting on $F_{1,0}$]{\scalebox{.5}{%
\begin{tikzpicture}
\filldraw [lightgray] (0,3) -- (0,-3) -- (5,-0.5) -- (5,5.5) -- (0,3);
\draw (5,5.5) -- (0,3) node {$\bullet$} -- (0,-3) node {$\bullet$}; 
\draw [dashed] (0,-3) -- (5,-0.5) node {$\bullet$} -- (5,5.5) node {$\bullet$}; 
\draw [dotted] (3,-3)
	   -- (3,3)
	   -- (8,5.5)
	   -- (8,-0.5)
	   -- (3,-3) -- (0,-3);
\draw [dotted] (0,3) -- (3,3);
\draw [dotted] (5,5.5) -- (8,5.5);
\draw [dotted] (5,-0.5) -- (8,-0.5);
\draw [dotted] (5.5,4.25)
			     -- (5.5,1.25);
\draw [dotted] (5.5,1.25) -- (5.5,-1.75);
\draw [dashed] (2.5,4.25) node {$\bullet$} 
	   -- (2.5,1.25) node {$\bullet$}
	   -- (2.5,-1.75) node {$\bullet$};
\draw  (0,0) node {$\bullet$}
		(5,2.5)  node {$\bullet$};
\end{tikzpicture}
}}\qquad
\subfigure[Step 1]{\scalebox{.5}{%
\begin{tikzpicture}
\filldraw [gray] (0,3) -- (0,-3) -- (5,-0.5) -- (5,5.5) -- (0,3);
\draw (5,5.5) -- (0,3) node {$\bullet$} -- (0,-3) node {$\bullet$}; 
\draw [dashed] (0,-3) -- (5,-0.5) node {$\bullet$} -- (5,5.5) node {$\bullet$}; 
\draw [dotted] (3,-3)
	   -- (3,3)
	   -- (8,5.5) node {$\bullet$}
	   -- (8,-0.5)
	   -- (3,-3) -- (0,-3);
\draw [dotted] (0,3) -- (3,3);
\draw (5,5.5) -- (8,5.5);
\draw [dotted] (5,-0.5) -- (8,-0.5);
\draw [dotted] (5.5,4.25)
			     -- (5.5,1.25);
\draw [dotted] (5.5,1.25) -- (5.5,-1.75);
\draw [dashed] (2.5,4.25) node {$\bullet$} 
	   -- (2.5,1.25) node {$\bullet$}
	   -- (2.5,-1.75) node {$\bullet$};
\draw  (0,0) node {$\bullet$}
		(5,2.5)  node {$\bullet$};
\end{tikzpicture}
}}\qquad
\subfigure[Step 2]{\scalebox{.5}{%
\begin{tikzpicture}
\filldraw [gray] (0,3) -- (0,-3) -- (5,-0.5) -- (5,5.5) -- (0,3);
\draw (5,5.5) -- (0,3) node {$\bullet$} -- (0,-3) node {$\bullet$}; 
\draw [dashed] (0,-3) -- (5,-0.5) node {$\bullet$} -- (5,5.5) node {$\bullet$}; 
\draw [dotted] (3,-3)
	   -- (3,3)
	   -- (8,5.5) node {$\bullet$}
	   -- (8,-0.5) node {$\bullet$};
\draw [dotted] (3,-3) -- (0,-3);
\draw (0,3) -- (3,3) node {$\bullet$};
\draw (5,5.5) -- (8,5.5) node {$\bullet$};
\draw [dashed] (5,-0.5) -- (8,-0.5); 
\draw [dotted] (8,-0.5) -- (3,-3);
\draw [dotted] (5.5,4.25)
			     -- (5.5,1.25);
\draw [dotted] (5.5,1.25) -- (5.5,-1.75);
\draw [dashed] (2.5,4.25) node {$\bullet$} 
	   -- (2.5,1.25) node {$\bullet$}
	   -- (2.5,-1.75) node {$\bullet$};
\draw  (0,0) node {$\bullet$}
		(5,2.5)  node {$\bullet$};
\end{tikzpicture}
}}\\
\subfigure[Step 3]{\scalebox{.5}{%
\begin{tikzpicture}
\filldraw [gray] (0,3) -- (0,-3) -- (5,-0.5) -- (5,5.5) -- (0,3);
\filldraw [lightgray] (0,3) -- (3,3) -- (8,5.5) -- (5,5.5) -- (0,3);
\filldraw [lightgray] (8,5.5) -- (5,5.5) -- (5,-0.5) -- (8,-0.5) -- (8,5.5);
\draw (5,5.5) -- (0,3) node {$\bullet$} -- (0,-3) node {$\bullet$}; 
\draw [dashed] (0,-3) -- (5,-0.5) node {$\bullet$} -- (5,5.5) node {$\bullet$}; 
\draw [dotted] (3,-3)
	   -- (3,3); 
\draw (3,3)
	   -- (8,5.5);
\draw (8,5.5) -- (8,-0.5) node {$\bullet$};
\draw [dotted] (3,-3) -- (0,-3);
\draw (0,3) -- (3,3) node {$\bullet$};
\draw (5,5.5) -- (8,5.5) node {$\bullet$};
\draw [dashed] (5,-0.5) -- (8,-0.5) -- (3,-3);
\draw [dotted] (5.5,4.25) node {$\bullet$} 
			     -- (5.5,1.25);
\draw [dotted] (5.5,1.25) -- (5.5,-1.75);
\draw [dashed] (2.5,4.25) node {$\bullet$} 
	   -- (2.5,1.25) node {$\bullet$}
	   -- (2.5,-1.75) node {$\bullet$};
\draw  (0,0) node {$\bullet$}
       (8,2.5) node {$\bullet$}
		(5,2.5)  node {$\bullet$};
\end{tikzpicture}
}}\qquad
\subfigure[Step 4]{\scalebox{.5}{%
\begin{tikzpicture}
\filldraw [gray] (5,-0.5) -- (5,5.5) -- (8,5.5) -- (8,-0.5) -- (5,-0.5);
\filldraw [lightgray] (5,-0.5) -- (8,-0.5) -- (3,-3) -- (0,-3) -- (5,-0.5);
\filldraw [gray] (0,3) -- (0,-3) -- (5,-0.5) -- (5,5.5) -- (0,3);
\filldraw [gray] (0,3) -- (3,3) -- (8,5.5) -- (5,5.5) -- (0,3);
\filldraw [lightgray] (0,3) -- (3,3) -- (3,-3) -- (0,-3) -- (0,3);
\draw (5,5.5) -- (0,3) node {$\bullet$} -- (0,-3) node {$\bullet$}; 
\draw [dashed] (0,-3) -- (5,-0.5) node {$\bullet$} -- (5,5.5) node {$\bullet$}; 
\draw (3,-3)
	   -- (3,3)
	   -- (8,5.5);
\draw (8,5.5) -- (8,-0.5) node {$\bullet$};
\draw (3,-3) node {$\bullet$} -- (0,-3);
\draw (0,3) -- (3,3) node {$\bullet$};
\draw (5,5.5) -- (8,5.5) node {$\bullet$};
\draw [dashed] (5,-0.5) -- (8,-0.5);
\draw (8,-0.5) -- (3,-3);
\draw [dotted] (5.5,4.25) node {$\bullet$} 
			     -- (5.5,1.25)
				 -- (5.5,-1.75) node {$\bullet$};
\draw [dashed] (2.5,4.25) node {$\bullet$} 
	   -- (2.5,1.25) node {$\bullet$}
	   -- (2.5,-1.75) node {$\bullet$};
\draw  (0,0) node {$\bullet$}
       (3,0) node {$\bullet$}
		(5,2.5) node {$\bullet$}
		(8,2.5) node {$\bullet$};
\end{tikzpicture}
}}\qquad
\subfigure[Step 5]{\scalebox{.5}{%
\begin{tikzpicture}
\filldraw [gray] (0,3) -- (3,3) -- (8,5.5) -- (5,5.5) -- (0,3);
\filldraw [gray] (0,3) -- (3,3) -- (3,-3) -- (0,-3) -- (0,3);
\filldraw [lightgray] (3,-3) -- (8,-0.5) -- (8,5.5) -- (3,3) -- (3,-3);
\draw (5,5.5) -- (0,3) node {$\bullet$} -- (0,-3) node {$\bullet$}; 
\draw [dashed] (0,-3) -- (5,-0.5) node {$\bullet$} -- (5,5.5) node {$\bullet$}; 
\draw (3,-3)
	   -- (3,3)
	   -- (8,5.5);
\draw (8,5.5) -- (8,-0.5) node {$\bullet$};
\draw (3,-3) node {$\bullet$} -- (0,-3);
\draw (0,3) -- (3,3) node {$\bullet$};
\draw (5,5.5) -- (8,5.5) node {$\bullet$};
\draw [dashed] (5,-0.5) -- (8,-0.5);
\draw (8,-0.5) -- (3,-3);
\draw [dashed] (5.5,4.25) node {$\bullet$} 
			     -- (5.5,1.25) node {$\bullet$}
				 -- (5.5,-1.75) node {$\bullet$};
\draw [dashed] (2.5,4.25) node {$\bullet$} 
	   -- (2.5,1.25) node {$\bullet$}
	   -- (2.5,-1.75) node {$\bullet$};
\draw  (0,0) node {$\bullet$}
       (3,0) node {$\bullet$}
		(5,2.5) node {$\bullet$}
		(8,2.5) node {$\bullet$};
\end{tikzpicture}
}}
\caption{Steps in the construction.}
\label{fig:Steps}
\end{figure}

\textsl{$\bullet$ Step  4. Extend by $\tau$-equivariance.} By imposing $\tau$-equivariance (property \ref{item:F2}), $\Phi\tre$ naturally extends to the faces $F_{3,-}$ and $F_{2,-}$, thus yielding a continuous Bloch frame $\Phi\quat$ defined on the set
\footnote{According to a longstanding tradition in geometry, the choice of symbols is inspired by the German language:  $K_0$ stands for \emph{Kleiderschrank ohne T\"{u}ren}. The reason for this name will be clear in few lines. }  
\  $\partial \Bred^{(3)} \setminus F_{1,+}  =: K_{0}$.
 
\textsl{$\bullet$ Step  5. Extend symmetrically to $F_{1,+}$ by Macro 2.} When considering the face $F_{1,+}$, we first notice that the two subsets\footnote{Obviously, these subsets are \emph{die T\"{u}ren}, so they are denoted by $T_{\pm}$.} 
\begin{equation} \label{Turen}
T_{\pm} = \set{k \in F_{1,+} :  \pm \, k_2 \geq 0 }
\end{equation}       
are related by a non-trivial symmetry, since $t_{e_1} c \, (T_{\pm}) = T_{\mp}$. We construct a continuous extension of $\Phi\quat$ which is compatible with the latter symmetry.  

The restriction of $\Phi\quat$, defined on $K_0$,  to the set  $ S_{+} = \set{k \in \partial F_{1,+} :  k_2 \geq 0 }$
is continuous and satisfies symmetries analogous to \eqref{Eq: S conditions}. Then, in view of Lemma \ref{Lem:Macro2} (Macro 2), $\Phi\quat$  continuously extends to the whole $T_{+}$ and the extension satisfies the relation analogous to \eqref{Eq: S complement} on the edge $\partial T_{+} \setminus S_{+}$. We denote the extension by $\Phi\cinque$. 

To obtain a continuous symmetric Bloch frame $\widehat \Phi :  \partial \Bred^{(3)} \to \Fr(m, \Hi)$ we set
\begin{equation} 
\label{Phi cinque}
\widehat \Phi (k) :=  \begin{cases}
 \Phi\quat(k)                            & \text{ for } k \in K_0   \\
 \Phi\cinque(k)                        & \text{ for }  k \in T_{+}   \\
  \Phi\cinque(t_{e_1}c(k))       & \text{ for }   k \in T_{-}. 
\end{cases}
\end{equation}
The function $\widehat \Phi$ is continuous in view of the edge and face symmetries which have been imposed in the construction. 
  
\textsl{$\bullet$ Step 6. Extend to the interior of the effective cell.} The frame $\widehat \Phi $  and the input frame $\Psi$ are related by the equation
\begin{equation} \label{hat Phi}
\widehat \Phi (k) =   \Psi(k) \act \widehat U (k)  \qquad \qquad k \in \partial \Bred^{(3)}, 
\end{equation}
which yields a continuous map $ \widehat U :  \partial \Bred^{(3)} \to \U(\C^m)$. 

We show that such a map extends to a continuous map $ \cell{U} :  \Bred^{(3)} \to \U(\C^m)$. Indeed, a continuous function $f$ from  $\partial \Bred^{(3)} \approx S^2$ to the topological space $X$  can be continuously extended to  
$\Bred^{(3)} \approx D^3$  if and only if its homotopy class $[f]$ is the trivial element of the group $\pi_{2}(X)$.  
In our case,  since $\pi_2(\U(\C^m)) = \set{0}$ for every $m \in \N$, there is no obstruction to the continuous extension of the map  $\widehat U$.   Moreover, the extension can be explicitly constructed, as detailed in Remark 
\ref{Rem:ExplicitFilling 3d}. \newline 
Equipped with such a continuous extension, we obtain a continuous symmetric Bloch frame 
 $ \cell{\Phi} :  \Bred^{(3)} \to \Hi^m$ by setting
\begin{equation} \label{cell Phi}
\cell{\Phi} (k) =   \Psi(k) \act \cell{U}(k)   \qquad \qquad k \in \Bred^{(3)}. 
\end{equation}  

\textsl{$\bullet$ Step 7. Use the smoothing procedure.} By using \eqref{Eq:Symmetric extension}, $\cell{\Phi}$ extends to 
a  global continuous symmetric Bloch frame.  Then the symmetry-preserving smoothing procedure (Proposition 
\ref{Lem:SmoothingProcedure}) yields a global smooth symmetric Bloch frame.

This concludes the proof of Theorem \ref{Thm:SmoothBlochFrames} for $d=3$.

\begin{rmk}[Explicit extension to the whole effective cell, $d=3$] \label{Rem:ExplicitFilling 3d}
As in the $2$-dimensional case (Remark \ref{Rem:ExplicitFilling 2d}), we notice that the extension 
of  the \emph{piecewise-smooth}
\footnote{The map $\widehat U$ is actually piecewise-smooth, whenever the input frame $\Psi$ is smooth. Although this fact was not emphasized at every step of the $3$-dimensional construction, as we did instead in Section \ref{Sec:2_dimensional}, the reader can easily check it. 
}\  
function $ \widehat U :  \partial \Bred^{(3)} \to \U(\C^m)$ to  $\Bred^{(3)}$ is completely explicit. 
The problem is again reduced to the following two subproblems, namely to 
construct a continuous extension from $\partial \Bred^{(3)}$ to $\Bred^{(3)}$ of: 
\begin{enumerate}[label=$(\mathrm{\alph*})$,ref=$(\mathrm{\alph*})$]
\item \label{item:sp i} a map $f : k \mapsto \det U(k) \in \U(1)$, and 
\item \label{item:sp ii} a map  $f^{\flat} :k \mapsto {U}^{\flat}(k) \in \mathcal{SU}(m)$.
\end{enumerate}

As for subproblem \ref{item:sp i},  given a  degree-zero  piecewise-smooth function $f:\partial \Bred^{(3)} \to \U(1)$,  
a  piecewise-smooth extension $F: \Bred^{(3)} \to \U(1)$ is constructed as follows. 
Consider the region $\mathbb{D}\subseteq\mathbb{R}^2$ depicted below
\[
\begin{xy}
,(-10,0);(30,0)**\dir{-}
,(30,0);(30,20)**\dir{-}
,(30,20);(20,20)**\dir{-}
,(20,20);(20,60)**\dir{-}
,(20,60);(0,60)**\dir{-}
,(0,60);(0,20)**\dir{-}
,(0,20);(-10,20)**\dir{-}
,(-10,20);(-10,0)**\dir{-}
,(0,20);(20,20)**\dir{--}
,(0,20);(0,0)**\dir{--}
,(20,20);(20,0)**\dir{--}
,(0,30);(20,30)**\dir{--}
,(0,50);(20,50)**\dir{--}
,(-12,-2)*{\scriptscriptstyle{(-1,-1/2)}}
,(32,-2)*{\scriptscriptstyle{(1,-1/2)}},(-12,22)*{\scriptscriptstyle{(-1,1/2)}}
,(32,22)*{\scriptscriptstyle{(1,1/2)}}
,(-2,62)*{\scriptscriptstyle{(-1/2,5/2)}}
,(22,62)*{\scriptscriptstyle{(1/2,5/2)}}
\end{xy}
\]
and let $\varphi:\mathbb{D}\to \U(1)$ be the piecewise-smooth function defined by

\begin{equation}\label{Eq: Phase_3d}
\varphi(s,t) :=
\begin{cases}
f(-s-\frac{1}{2},-\frac{1}{2}, t) & \text{if $(s,t)\in[-1,-\frac{1}{2}]\times [-\frac{1}{2},\frac{1}{2}]$}\\
f(0,s, t)& \text{if $(s,t)\in[-\frac{1}{2},\frac{1}{2}]\times [-\frac{1}{2},\frac{1}{2}]$}\\
f(s-\frac{1}{2}, \frac{1}{2}, t)& \text{if $(s,t)\in[\frac{1}{2},1]\times [-\frac{1}{2},\frac{1}{2}]$}\\
f(t-\frac{1}{2}, s, \frac{1}{2})& \text{if $(s,t)\in[-\frac{1}{2},\frac{1}{2}]\times [\frac{1}{2},1]$}\\
f(\frac{1}{2},s, -t + \frac{3}{2})& \text{if $(s,t)\in[-\frac{1}{2},\frac{1}{2}]\times [1,2]$}\\
f(-t + \frac{5}{2},s, -\frac{1}{2})& \text{if $(s,t)\in[-\frac{1}{2},\frac{1}{2}]\times [2,\frac{5}{2}]$}
\end{cases}
\end{equation}

\noindent Choose $\theta_0\in \mathbb{R}$ be such that $f(0,0,0)= \eu^{2\pi \iu \theta_0}$ and set
\[
\theta(s,t)=\theta_0+\frac{1}{2\pi i}\int_0^1\varphi(\lambda s,\lambda t)^{-1}\left(s\varphi_s(\lambda s,\lambda t)+t\varphi_t(\lambda s,\lambda t)\right)d\lambda,
\]
where $\varphi_s$ and $\varphi_t$ denote the partial derivatives of $\varphi$ with respect to $s$ and $t$, respectively. One has
\[
\eu^{2\pi \iu \theta(s,t)}=\varphi(s,t)
\]
for any $(s,t)\in \mathbb{D}$. Moreover, 
\[
\begin{cases}
\theta(s,-\frac{1}{2})=\theta(s,\frac{5}{2}) &\text{for $-\frac{1}{2}\leq s\leq\frac{1}{2}$}\\
\theta(s,-\frac{1}{2})=\theta(-\frac{1}{2},3+s) &\text{for $-1\leq s\leq-\frac{1}{2}$}\\
\theta(s,-\frac{1}{2})=\theta(\frac{1}{2},3-s) &\text{for $\frac{1}{2}\leq s\leq1$}\\
\theta(s,\frac{1}{2})=\theta(-\frac{1}{2},\frac{1}{2}-s) &\text{for $-1\leq s\leq-\frac{1}{2}$}\\
\theta(s,\frac{1}{2})=\theta(\frac{1}{2},\frac{1}{2} + s) &\text{for $\frac{1}{2}\leq s\leq 1$}\\
\theta(-1,t)=\theta(-\frac{1}{2},\frac{3}{2}-t) &\text{for $- \frac{1}{2}\leq t\leq \frac{1}{2}$}\\
\theta(1,t)=\theta(\frac{1}{2},\frac{3}{2}-t) &\text{for $\frac{1}{2}\leq t\leq \frac{1}{2}$}\\
\end{cases}
\]
so that $\theta$ actually lifts $f$ to a piecewise-smooth function $\theta:\partial\Bred^{(3)}\to \mathbb{R}$.
Then, one can choose $F(k_1,k_2,k_3)= \eu^{2\pi \iu \, \omega(k_1,k_2,k_3)}$, where
\[
\omega(k_1,k_2,k_3)=
\begin{cases}
(-4k_1+1) \, \theta\(-\frac{k_3}{4k_1-1},-\frac{k_2}{4k_1-1}\)&\text{if $0\leq k_1\leq \min\{|\frac{1}{2}|k_2|-\frac{1}{4}|,|\frac{1}{2}|k_3|-\frac{1}{4}|\}$},\\ \\
2k_3\,\, \theta\(\frac{1}{2}+\frac{k_3+2k_1-1/2}{4k_3},\frac{k_2}{2k_3}\)&\text{if $\max\{|2k_1-\frac{1}{2}|,|k_2|\}\leq k_3\leq \frac{1}{2}$},\\ \\
2k_2\,\, \theta\(\frac{k_3}{2k_2},\frac{1}{2}+\frac{k_2+2k_1-1/2}{4k_2}\)&\text{if $\max\{|2k_1-\frac{1}{2}|,|k_3|\}\leq k_2\leq \frac{1}{2}$},\\ \\
(4k_1-1)\, \theta\(\frac{k_3}{4k_1-1},\frac{3}{2}-\frac{k_2}{4k_1-1}\)&\text{if $ \max\{|\frac{1}{2}|k_2|+\frac{1}{4}|,|\frac{1}{2}|k_3|+\frac{1}{4}|\}\leq k_1\leq \frac{1}{2}$},\\ \\
-2k_3\,\, \theta\(-\frac{1}{2}-\frac{k_3-2k_1+1/2}{4k_3},-\frac{k_2}{2k_3}\)&\text{if $-\frac{1}{2}\leq k_3\leq \min\{-|2k_1-\frac{1}{2}|,-|k_2|\}$},\\ \\
-2k_2\,\, \theta\(-\frac{k_3}{2k_2},\frac{5}{2}-\frac{k_2-2k_1+1/2}{4k_2}\)&\text{if $-\frac{1}{2}\leq k_2\leq \min\{-|2k_1-\frac{1}{2}|,-|k_3|\}$}.\\ \\
\end{cases}
\]
Note that $\omega$ is continuous at $(1/4,0,0)$ with $\omega(1/4,0,0)=0$, since $\theta$ is continuous on the compact domain $\mathbb{D}$.


As mentioned in Remark \ref{Rem:ExplicitFilling 2d}, the solution to  subproblem \ref{item:sp ii} can be obtained by recursive reduction of the rank $m$, up to $m=2$.   To construct the extension in the case $m=2$,  we identify $\mathcal{SU}(2)$ with $S^3$ and use the stereographic projection \eqref{Def:stereographic?}, with respect to a point $p \in S^3$ which is not included in the range
\footnote{ This point does exist since the map  $f^{\flat}$ is piecewise-smooth, as argued in footnote \!\!\! \ref{foot:Sard}. 
}  
of  the map $f^{\flat}$.  By using Equation \eqref{Eq: Phase_3d}, with $f$ replaced by $f^{\flat}$,  one defines a piecewise-smooth function $\varphi:\mathbb{D}\to S^3$. Then, a piecewise-smooth extension of $f^{\flat}$ to a function $F^{\flat} : \Bred^{(3)}\to S^3$ is explicitly  
given by
\[
F(k_1,k_2,k_3)=
\begin{cases}
\psi_{p}^{-1}\((-4k_1+1)\psi_p\(\varphi(-\frac{k_3}{4k_1-1},-\frac{k_2}{4k_1-1})\)\)&\text{if $0\leq k_1\leq \min\{|\frac{1}{2}|k_2|-\frac{1}{4}|,|\frac{1}{2}|k_3|-\frac{1}{4}|\}$},\\ \\
\psi_{p}^{-1}\(2k_3\,\psi_p\(\varphi(\frac{1}{2}+\frac{k_3+2k_1-1/2}{4k_3},\frac{k_2}{2k_3})\)\)&\text{if $\max\{|2k_1-\frac{1}{2}|,|k_2|\}\leq k_3\leq \frac{1}{2}$},\\ \\
\psi_{p}^{-1}\(2k_2\,\psi_p\(\varphi(\frac{k_3}{2k_2},\frac{1}{2}+\frac{k_2+2k_1-1/2}{4k_2})\)\)&\text{if $\max\{|2k_1-\frac{1}{2}|,|k_3|\}\leq k_2\leq \frac{1}{2}$},\\ \\
\psi_{p}^{-1}\((4k_1-1)\psi_p\(\varphi(\frac{k_3}{4k_1-1},\frac{3}{2}-\frac{k_2}{4k_1-1})\)\)&\text{if $ \max\{|\frac{1}{2}|k_2|+\frac{1}{4}|,|\frac{1}{2}|k_3|+\frac{1}{4}|\}\leq k_1\leq \frac{1}{2}$},\\ \\
\psi_{p}^{-1}\(-2k_3\,\psi_p\(\varphi(-\frac{1}{2}-\frac{k_3-2k_1+1/2}{4k_3},-\frac{k_2}{2k_3})\)\)&\text{if $-\frac{1}{2}\leq k_3\leq \min\{-|2k_1-\frac{1}{2}|,-|k_2|\}$},\\ \\
\psi_{p}^{-1}\(-2k_2,\psi_p\(\varphi(-\frac{k_3}{2k_2},\frac{5}{2}-\frac{k_2-2k_1+1/2}{4k_2})\)\)&\text{if $-\frac{1}{2}\leq k_2\leq \min\{-|2k_1-\frac{1}{2}|,-|k_3|\}$}.\\ \\
\end{cases}
\]
Note that $F$ is continuous at $(1/4,0,0)$ with $F(1/4,0,0)=-p$, since $\psi_p\circ\varphi: \mathbb{D} \to p^\perp\subseteq \mathbb{R}^4$ is continuous on the compact domain $\mathbb{D}$. 
The map above provides an explicit continuous extension to $\Bred^{(3)}$ for $m =2$. \hfill $\lozenge$
\end{rmk}


\subsection{A glimpse to  the higher-dimensional cases}
\label{Sec:h_dimensional}

The fundamental \virg{building bricks} used to solve the $3$-dimensional problem (Lemma \ref{Lem:Macro1} and \ref{Lem:Macro2})
can be used to approach the higher-dimensional problems. However, additional topological obstruction might appear, related to the fact that the $k$-th homotopy group $\pi_k \( \U(\C^m)\)$, for $k \geq 3$, might be non-trivial if $m>1$. 

We illustrate this phenomenon in the case $d =4$. An iterative procedure analogous to the construction in Subsection \ref{Sec:3_dimensional}, based again only on Lemma \ref{Lem:Macro1} and \ref{Lem:Macro2}, yields a continuous Bloch frame 
$\widehat \Phi : \partial \Bred^{(4)} \to \Hi^m$ satisfying all the relevant symmetries (on vertices, edges, faces and $3$-dimensional hyperfaces). By comparison with the input frame $\Psi : \Bred^{(4)}  \to \Hi^m$, one obtains a continuous map 
 $\widehat U : \partial \Bred^{(4)} \to \U(\C^m)$ such that  $\widehat \Phi(k) =  \Psi (k) \act  \widehat U(k)$ for all $k \in  \partial \Bred^{(4)} $.  Arguing as in Subsection \ref{Sec:3_dimensional}, one concludes that the existence of a continuous extension $\cell{U} : \Bred^{(4)} \to \U(\C^m)$ of $\widehat U$ is equivalent to the  fact that the homotopy class $[\widehat U]$ is the trivial element of  $\pi_3 \( \U(\C^m)\)$. Since the latter group is not trivial (for $m>1$), there might be \emph{a priori} topological obstruction to the existence of a continuous extension. This possible obstruction corresponds, in the abstract approach used in \cite{Panati2007, MoPa}, to the appearance  for $d \geq 4$ of the second Chern class of the Bloch bundle, which always vanishes for $d \leq 3$ or $m=1$. 
 
 On the other hand, our constructive algorithm works without obstruction in the case $m=1$, since $\pi_k \( \U(\C^1)\) = 0$ for 
 all $k \geq 2$, yielding an explicit construction of a global smooth symmetric Bloch frame.  However, since a constructive proof in the case $m=1$ is already known for every $d \in \N$ \cite{HeSj89}, we do not provide the details of the construction.



\section{A symmetry-preserving smoothing procedure}
\label{Sec:SmoothingProcedure}

In this Section we develop a smoothing procedure which, given a global continuous symmetric Bloch frame, 
yields a global \emph{smooth symmetric} Bloch frame arbitrarily close to the given one. The following Proposition, which holds true  in any dimension, might be of independent interest.  

\begin{prop}[Symmetry-preserving smoothing procedure] \label{Lem:SmoothingProcedure}
For $d \in \N$,  let $\mathcal{P} = \set{P(k)}_{k \in \R^d}$ be a family of orthogonal projectors
satisfying Assumption \ref{Ass:projectors}. Assume that $\Phi : \R^d \to \Hi^m$ is a global \emph{continuous 
symmetric}  Bloch frame, \ie it satisfies properties $(\mathrm{F}_{0})$, $(\mathrm{F}_{2})$ and 
$(\mathrm{F}_{3})$. 

\noindent Choose $\epsi > 0$. Then one constructs a global \emph{smooth symmetric}  Bloch frame 
$\Phi\sub{sm}$ such that 
\begin{equation} \label{Eq:Frame_distance}
\sup_{k \in \R^d} \dist \( \Phi(k),  \Phi\sub{sm}(k) \) < \epsi 
\end{equation} 
where  $\dist \( \Phi,  \Psi \) = \( \sum_a \norm{\phi_a - \psi_a}_{\Hi}^2 \)^{1/2}$ is the distance in $\Fr(m, \Hi)$.
\end{prop}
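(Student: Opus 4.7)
The plan is a three-step procedure --- \textsl{mollify, project, symmetrically orthonormalize} --- in which every step commutes with both the $\Lambda$-action $\tau$ and the antiunitary $\Theta$. First, I would convolve $\Phi$ with a smooth, non-negative, compactly supported mollifier $\chi_\delta \in C_c^\infty(\R^d)$ that is both \emph{real} and \emph{even}, to obtain
\[
\tilde\phi_a(k) := \int_{\R^d} \chi_\delta(k-k')\,\phi_a(k')\,\di k'.
\]
Since $\|\phi_a(k)\|_\Hi = 1$ uniformly in $k$, this is well-defined and $C^\infty$, and $\tilde\phi_a \to \phi_a$ uniformly as $\delta \to 0$. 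A change of variables shows that $\tilde\phi_a$ remains $\tau$-equivariant; evenness and reality of $\chi_\delta$, together with the antilinearity of $\Theta$, yield $\tilde\phi_a(-k) = \Theta\,\tilde\phi_a(k)$ as well.

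Second, I would project fiberwise onto $\Ran P(k)$: set $\phi_a'(k) := P(k)\tilde\phi_a(k)$. Smoothness of $P$ (property \ref{item:smooth}) makes $\phi_a'$ smooth; $\tau$-covariance \ref{item:tau} and time-reversal symmetry \ref{item:TRS} of $P$ transport both symmetries from $\tilde\phi_a$ to $\phi_a'$; and since $P(k)\phi_a(k) = \phi_a(k)$, the $\phi_a'$ are still uniformly close to $\phi_a$. Third, instead of Gram--Schmidt I would use the \emph{symmetric} L\"owdin orthonormalization, which interacts cleanly with $\Theta$: defining the Gram matrix $G(k)_{ab} := \langle \phi_a'(k), \phi_b'(k)\rangle_\Hi$, set
\[
\phi_a\sub{sm}(k) := \sum_{b=1}^{m} \phi_b'(k)\,\bigl(G(k)^{-1/2}\bigr)_{ba}.
\]
Uniform convergence $\phi_a' \to \phi_a$ gives $G \to \Id$ uniformly, so for $\delta$ small enough $G(k)$ is everywhere positive and $G^{-1/2}$ is smooth via the continuous functional calculus; hence $\{\phi_a\sub{sm}(k)\}_{a}$ is a smooth orthonormal frame in $\Ran P(k)$, which can be made $\epsi$-close to $\Phi$ by taking $\delta$ small.

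The delicate point --- and main obstacle --- is verifying that this last step preserves $(\mathrm{F}_2)$ and $(\mathrm{F}_3)$. From $\tau$-equivariance of $\phi_a'$ one gets $G(k+\lambda) = G(k)$, and from time-reversal invariance together with the antiunitarity of $\Theta$ one gets $G(-k) = \overline{G(k)}$; by functional calculus both identities pass to $G^{-1/2}$. Combining with the identity $\Theta(\phi'\act M) = (\Theta\phi')\act\overline{M}$, valid by antilinearity of $\Theta$, yields $\phi_a\sub{sm}(-k) = \Theta\,\phi_a\sub{sm}(k)$ and $\phi_a\sub{sm}(k+\lambda) = \tau_\lambda\,\phi_a\sub{sm}(k)$. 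This is precisely why the symmetric square root is the natural choice: it manifestly commutes with complex conjugation of matrices --- exactly how $\Theta$ acts on $G$ --- so no induction on an ordering of the frame vectors is required, and all the fundamental symmetries survive in one shot.
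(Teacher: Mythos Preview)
Your argument is correct and takes a genuinely different, more elementary route than the paper's. The paper first passes to the associated smooth Hermitian bundle $\mathcal{E}_{\mathcal{P}}$ over $\T^d_*$, identifies a continuous $\tau$-equivariant frame with a continuous section of the frame bundle $\Fr\mathcal{E}_{\mathcal{P}}$, and invokes Steenrod's Approximation Theorem to obtain a nearby smooth section; this yields a smooth $\tau$-equivariant frame $\Phi'\sub{sm}$ that in general \emph{loses} $(\mathrm{F}_3)$. Time-reversal symmetry is then restored a posteriori by a geodesic-midpoint construction in $\U(m)$: one sets $\Phi\sub{sm}(k) := \wideparen{M}\bigl(\Phi'\sub{sm}(k),\,\Theta\,\Phi'\sub{sm}(-k)\bigr)$ and checks, using properties of the Riemannian exponential on $\U(m)$, that the midpoint is smooth, symmetric, and close to $\Phi$. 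Your approach is more direct: by choosing the mollifier real and even, and by replacing Gram--Schmidt with the L\"owdin square root, you keep $(\mathrm{F}_2)$ and $(\mathrm{F}_3)$ intact at \emph{every} stage, so no separate symmetrization is needed; this avoids both the bundle-theoretic input (Steenrod) and the Riemannian geometry of $\U(m)$, and is more explicitly constructive. The paper's route, by contrast, cleanly decouples the smoothing problem from the symmetrization problem, and the midpoint device is of independent interest as a way to symmetrize any smooth $\tau$-equivariant frame sufficiently close to a symmetric one.
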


Notice that, for any  $\Phi \in \Fr(m, \Hi)$ and $U, W \in \U(m)$, one has 
\begin{equation} \label{Eq:U-W_distance}
\dist(\Phi \act U, \Phi \act W) =  \norm{U -W}\sub{HS}  
\end{equation}
where $\norm{U}\sub{HS}^2 = \sum_{a,b=1}^m | U_{ab}|^2$ is the Hilbert-Schmidt norm. Thus,  the distance between the frames 
$\Phi \act U$  and $\Phi \act W$   is the length of the chord between $U$ and $W$  in the ambient space $\C^{m^2} \simeq M_m(\C) \supset \U(m) $.   On the other hand,  each frame space $F_k := \Fr \Ran P(k) \simeq \U(m)$  inherits from $\U(m)$ a Riemannian structure
\footnote{ Recall that $\U(m)$ is a Riemannian manifold with respect to the bi-invariant metric defined, for $A, B$ in the Lie algebra $\mathfrak{u}(m) = \set{A \in M_m(\C) : A^*=-A}$, by  $\inner{A}{B}\sub{HS} = \tr (A^* B)$.}, 
and the corresponding geodesic distance $\di(U,W)$ can be compared to the chord distance \eqref{Eq:U-W_distance}.  In a neighborhood  of the identity, the  geodesic distance and the ambient distance are locally Lipschitz equivalent, namely
\begin{equation} \label{Eq:distance_equivalnce}
\half \, \di(\id, U) \leq  \norm{\id - U}\sub{HS}  \leq   \di(\id,U)      \qquad \qquad   \forall U \in \U(M) :\  \norm{\id - U}\sub{HS} < \half \tau_m,   
\end{equation}  
where $\tau_m$ is defined as the largest number having the following property: The open normal bundle over $\U(m)$ of radius $r$ is embedded in $\R^{2m^2} \simeq M_m(\C)$ for every $r < \tau_m$.  The first inequality in \eqref{Eq:distance_equivalnce} is a straightforward consequence of  \cite[Prop. 6.3]{NSW}, where  also the relation between $\tau_m$ and the principal curvature of $\U(m)$ is discussed.    

\begin{proof}[Proof] Following \cite{Panati2007},  we recall that, to a family of projectors $\mathcal{P}$ satisfying properties \ref{item:smooth} and \ref{item:tau}, one can canonically associate a {smooth} Hermitian vector bundle 
$\mathcal{E_P} = (E_{\mathcal{P}} \to \T^d_*)$,  where $\T^d_* = \R^d / \Lambda$.  In particular, $\mathcal{E_P}$ is defined by using an equivalence relation $\sim_{\tau}$ on $\R^d \times \Hi$, namely 
$$
(k, \phi) \sim_{\tau} (k', \phi')  \qquad  \text{ if and only if }   \qquad  \exists  \la \in \La :  k' = k + \la, \ \phi' = \tau_\la \phi . 
$$
An equivalence class is denoted by $[k, \phi]_\tau$. Then, the total space is defined by 
$$
E_{\mathcal{P}} = \set{[k, \phi]_\tau \in \( \R^d \times \Hi \)/\sim_{\tau}  : 
\phi \in \Ran P(k)},
$$
and the projection $\pi:  E_{\mathcal{P}} \to \T^d_*$ by $\pi([k, \phi]_\tau) = k $ (mod $\La$). The fact that  $\mathcal{E_P}$ is a smooth vector bundle follows from \ref{item:smooth} and the Kato-Nagy formula, see \cite{Panati2007, PanatiPisante} for the proof. Moreover, a natural Hermitian structure is induced by the inner product in $\Hi$. 

Equipped with the above definition, we observe that a  {continuous $\tau$-equivariant} global Bloch frame $\Phi :  \R^d \to \Hi^m$ is identified with a  {continuous} global section $\sigma_{\Phi} $ of the (principal) bundle of the orthonormal frames of the bundle $\mathcal{E_P}$, denoted by $\Fr \mathcal{E_P}$. The identification is given by 
$$
\sigma_{\Phi}(x) = \( [k, \phi_1(k)]_{\tau}, \ldots, [k, \phi_m(k)]_{\tau} \) \in \( \Fr E_{\PB}\)_x   \qquad \qquad \text{for }  x = k \text{ mod } \La.
$$

\noindent According to a classical result, the Steenrod's Approximation Theorem (\cite{Steenrod}; see \cite{Wockel} for recent generalizations), there exists a  {smooth} global section $\sigma'_{\Phi}  :   \T^d_* \to \Fr E_{\PB}$ such that
$$
\sup_{x \in\T^d_*} \dist \(\sigma_{\Phi}(x),  \sigma'_{\Phi}(x) \) < \half \, \epsi. 
$$

Going back to the language of Bloch frames, one concludes the existence of a global  {smooth $\tau$-equivariant} Bloch frame $\Phi' \sub{sm}: \R^d \to \Hi^m$, such that 
\begin{equation} \label{Eq:Frame_closeness}
\sup_{k \in \R^d} \dist \( \Phi(k), \Phi' \sub{sm}(k) \) < \half \, \epsi. 
\end{equation} 


In general,  the Bloch frame $\Phi' \sub{sm}$ does not satisfy property $(\mathrm{F}_{3})$.  In order to recover time-reversal symmetry, we use  the following symmetrization procedure. 


First, we recall that there exists $\delta >0$ such that the exponential map $\exp : \mathfrak{u}(m) \to \U(m)$ is a diffeomorphism from the ball $B_{\delta}(0) \subset  \mathfrak{u}(m)$ to the geodesic ball $B_{\delta}(\id) \subset \U(m)$, see \eg 
\cite[Chapter II]{Helgason} or \cite[Chapter VII]{Simon}. In particular, for any $U \in B_{\delta}(\id)$, there exists a unique $A_U \in B_{\delta}(0)$ such that
 \begin{equation} \label{Eq: exponential map}
U = \exp (A_U), \qquad \qquad U \in B_{\delta}(\id),   
\end{equation}
and, moreover, the map $U \mapsto A_U$ is $C^{\infty}$-smooth on $B_{\delta}(\id)$. Since the exponential map is normalized so that $\di(\id, U) = \norm{A_U}\sub{HS}$, then $\di(\id, \overline{U}) = \di(\id, U) = \di(\id, U^{-1})$. 
In particular, both $\overline{U}$ and $U^{-1}$ are in the geodesic ball $B_{\delta}(\id)$, whenever $U \in B_{\delta}(\id)$. 

\noindent For $U \in B_{\delta}(\id)$, the  {midpoint $M(\id, U)$} between $\id$ and $U$  is defined by
\footnote{Definition \eqref{Eq:geodesic_midpoint} agrees with the geodesic midpoint  between $\id$ and $U$ in the Riemannian manifold $\U(m)$, since the exponential map is normalized so that $\di(\id, \exp(sN)) = s$, for $s < \delta$ and $\norm{N}\sub{HS}=1$. }
\begin{equation} \label{Eq:geodesic_midpoint}
M(\id, U) := \exp (\half A_U) \qquad  \in B_{\delta}(\id) \subset \U(m).
\end{equation}
One immediately checks that, for $U \in B_{\delta}(\id)$, 
\begin{eqnarray}\label{Eq:M(I,U)} 
M(\id, \overline{U}) &=& \exp\( \half \overline{A_U} \) = \overline{M(\id, U)} \\
M(\id, U^{-1})  &=&  \exp\( - \half A_U \)  = U^{-1} M(\id, U).  \label{Eq:M(I,U*)}
\end{eqnarray}
Moreover, 
\begin{equation} \label{Eq:mid_distance}
\di(\id, M(\id, U)) = \half \di(\id, U). 
\end{equation}


\bigskip

Consider now two orthonormal frames $\Phi, \Psi \in \Fr \Ran P(k)$, such that $\dist(\Phi, \Psi) < \epsi$.  For $\epsi$ sufficiently 
small, we define the {\bf midpoint} $\midpoint{\Phi}{\Psi} \in \Fr \Ran P(k)$ in the following way.  

\noindent Let $U_{\Phi, \Psi} \in \U(m)$ be the unique unitary such that $\Psi = \Phi \act \, U_{\Phi, \Psi}$, namely $\(U_{\Phi, \Psi}\)_{ab} = \inner{\phi_a}{\psi_b}$.  Taking \eqref{Eq:U-W_distance} and \eqref{Eq:distance_equivalnce} into account, one has 
$$
\epsi > \dist(\Phi, \Psi) =  \dist(\Phi, \Phi \act \UP) = \norm{\id - \UP}\sub{HS} \geq  \half \,\, \di(\id, \UP).
$$

\noindent Choose $\epsi$ sufficiently small, namely $\epsi < \delta/2$. Then $\UP$ is in the geodesic ball $B_{\delta}(\id) \subset \U(m)$. By using  \eqref{Eq:geodesic_midpoint}, we define 
\begin{equation} \label{Eq:Fr_midpoint}
\midpoint{\Phi}{\Psi} :=  \Phi \act M(\id, \UP)            \qquad \in \Fr \Ran P(k).
\end{equation}


We  show that 
\begin{eqnarray} 
\midpoint{\Theta \, \Phi}{\Theta \Psi} =\Theta \, \midpoint{\Phi}{\Psi}
\label{Eq:midpoint_invariance_1}\\
\label{Eq:midpoint_invariance_2}
\midpoint{\tau_\la \Phi}{\tau_\la \Psi} =\tau_\la  \, \midpoint{\Phi}{\Psi}.
\end{eqnarray}

\noindent  Notice preliminarily that, since both $\Theta$ and $\tau_\la$ are isometries of $\Hi$, one has 
\begin{equation}  \label{Eq:dist_invariance}
\dist(\Theta \, \Phi, \Theta \Psi)  = \dist(\Phi, \Psi) = \dist(\tau_\la \Phi, \tau_\la \Psi)     \\
\end{equation}
for all  $\Phi, \Psi \in \Fr(m, \Hi)$. Thus, the midpoints appearing on the left-hand sides of  \eqref{Eq:midpoint_invariance_1}
and  \eqref{Eq:midpoint_invariance_2} are well-defined, whenever $\dist(\Phi, \Psi) < \half \delta$. 

\noindent Equation \eqref{Eq:midpoint_invariance_1} follows from  \eqref{Eq:M(I,U)}  and from
the fact that $\Theta \(\Phi \act \UP \) = \( \Theta \, \Phi\) \act \overline{\UP}$.  Indeed, one has
\begin{eqnarray*} 
\midpoint{\Theta \, \Phi}{\Theta \Psi} &=&  \midpoint{\Theta \, \Phi}{\Theta \(\Phi \act \UP \) } = \midpoint{\Theta \, \Phi}{\(\Theta \, \Phi\) \act \overline{\UP}} \\
&=&  \(\Theta \, \Phi \) \act M(\id, \overline{\UP}) = \( \Theta  \Phi \) \act \overline{M(\id, \UP)}  \\
&=& \Theta  \( \Phi  \act M(\id, \UP) \) = \Theta \, \midpoint{\Phi}{\Psi}. 
\end{eqnarray*}
\noindent Analogously, equation \eqref{Eq:midpoint_invariance_2} follows from the fact that $\tau_\la \(\Phi \act \UP \) = \( \tau_\la \Phi\) \act \UP$.

\bigskip

We focus now on  the smooth $\tau$-equivariant Bloch frame $\Phi' \sub{sm} :  \R^d \to \Hi^m$ obtained, via Steenrod's theorem, from the continuous symmetric frame $\Phi$.  Since $\Ran P(k) = \Theta \Ran P(-k)$, one has that  $\Theta \,\Phi' \sub{sm}(-k)$ is in  $F_k = \Fr \Ran P(k)$.  Thus, we set
\begin{equation} \label{Def:Symmetrized_section}
\Phi \sub{sm}(k) := \midpoint{\Phi' \sub{sm}(k)}{\Theta \,\Phi' \sub{sm}(-k)}  \qquad \in F_k. 
\end{equation}  
The definition \eqref{Def:Symmetrized_section} is well-posed.  Indeed, taking \eqref{Eq:Frame_closeness} and 
\eqref{Eq:dist_invariance} into account, one has 
\begin{eqnarray}\label{Eq:Frame_comparison} 
&\hspace{-2cm}&\dist(\Phi' \sub{sm}(k), \Theta \,\Phi' \sub{sm}(-k)) \\
& \leq & \dist(\Phi' \sub{sm}(k), \Phi(k)) + \dist(\Phi(k), \Theta \, \Phi(-k))  +  \dist(\Theta \, \Phi(-k), \Theta \,\Phi' \sub{sm}(-k))  
\nonumber \\
& = &   \dist(\Phi' \sub{sm}(k), \Phi(k))  +  \dist(\Phi(-k), \Phi' \sub{sm}(-k))  <  \epsi < \half \delta,  
\nonumber
\end{eqnarray}
where we used the fact that the central addendum (in the second line) vanishes since $\Phi$ satisfies $(\mathrm{F}_{3})$.  

We claim that \eqref{Def:Symmetrized_section} defines a smooth symmetric global Bloch frame satisfying  \eqref{Eq:Frame_distance}. We explicitly check that:

\textbf{1.} the map $k \mapsto \Phi \sub{sm}(k)$ is smooth. Indeed, since $\Theta$ is an isometry of $\Hi$, the map
$k \mapsto \Theta \,\Phi' \sub{sm}(-k)=: \Psi'\sub{sm}(k)$ is smooth. Hence $k \mapsto U_{{\Phi' \sub{sm}}(k),{\Psi' \sub{sm}}(k)} \in \U(m)$ is smooth,  since  $ \(U_{\Phi, \Psi}\)_{ab} = \inner{\phi_a}{\psi_b}$.  In view of  \eqref{Eq:Frame_comparison}  and \eqref{Eq:distance_equivalnce},  $U_{\Phi' \sub{sm}(k), \Psi'\sub{sm}(k)} $  is, for every $k \in \R^d$, in the geodesic ball $B_{\delta}(\id)$ where the exponential map defines a diffeomorphism.  As a consequence, 
 $$
 k \mapsto\Phi' \sub{sm}(k) \act M(\id, U_{\Phi' \sub{sm}(k), \Psi' \sub{sm}(k)}) =  \Phi\sub{sm}(k)  
 $$
is smooth from $\R^d$ to $\Hi^m$. 

\textbf{2.} the Bloch frame $\Phi \sub{sm}$ satisfies $(\mathrm{F}_{2})$. Indeed, by using 
$(\mathrm{P}_{4})$ and \eqref{Eq:midpoint_invariance_2}, one obtains
\begin{eqnarray*} 
\label{Eq:phi_tau_eq}
\Phi\sub{sm}(k + \la) &=&   \midpoint{\Phi'\sub{sm}(k + \la) }{ \Theta \,\Phi' \sub{sm}(- k- \la)} \\
 &=&   \midpoint{\tau_\la \,\Phi' \sub{sm}(k) }{ \Theta \, \tau_{- \la}  \,\Phi' \sub{sm}(- k)} \\
 &=&   \midpoint{\tau_\la \,\Phi' \sub{sm}(k) }{ \tau_{\la} \, \Theta  \,\Phi' \sub{sm}(- k)}  \\
 &=&   \tau_\la \, \midpoint{\Phi'\sub{sm}(k) }{\Theta  \,\Phi' \sub{sm}(- k)} =  \tau_\la \, \Phi\sub{sm}(k). 
\end{eqnarray*}

\textbf{3.} the Bloch frame $\Phi \sub{sm}$ satisfies $(\mathrm{F}_{3})$. Indeed, by using 
$\Theta^2 = \Id$ and \eqref{Eq:midpoint_invariance_1}, one has  
\begin{eqnarray*} 
\label{Eq:phi_TR}
\Phi\sub{sm}(- k) &=&   \midpoint{ \Theta^2 \,\Phi' \sub{sm}(- k) }{ \Theta \,\Phi' \sub{sm}(k)} \\
 &=&   \Theta \, \midpoint{\Theta \,\Phi' \sub{sm}(- k) }{\Phi' \sub{sm}(k)} \\
 &=&  \Theta \, \midpoint{\Phi'\sub{sm}(k) }{ \Theta \,\Phi' \sub{sm}(- k)} =   \Theta \, \Phi\sub{sm}(k),
 \end{eqnarray*}
where we used the fact that $\midpoint{\Phi}{\Psi}=\midpoint{\Psi}{\Phi}$, whenever $\dist(\Phi, \Psi) < \delta/2$. The latter fact 
is a direct consequence of  \eqref{Eq:M(I,U*)}, since 
\begin{eqnarray*} 
\midpoint{\Phi}{\Psi} &=& \Phi \act M(\id,\UP) = \(\Psi \act \UP^{-1}\) \act  M(\id,\UP) \\
&=&  \Psi \act \( \UP^{-1}  \, M(\id,\UP) \)  =  \Psi \act M(\id,\UP^{-1}) \\ 
&=&  \Psi \act M(\id, U_{\Psi, \Phi}) =  \midpoint{\Psi}{\Phi}.
\end{eqnarray*}

\textbf{4.}  equation  \eqref{Eq:Frame_distance} is satisfied in view of \eqref{Eq:Frame_comparison}. Indeed, 
setting $U_{{\Phi'\sub{sm}}(k),{\Psi'\sub{sm}}(k)} \equiv U(k)$ for notational convenience and using \eqref{Eq:U-W_distance}, \eqref{Eq:distance_equivalnce} and \eqref{Eq:mid_distance}, one obtains 
\begin{eqnarray*} \label{Eq:small_distance}
\dist(\Phi'\sub{sm}(k), \Phi\sub{sm}(k)) &=&  \dist \(\Phi' \sub{sm}(k),\Phi' \sub{sm}(k) \act M(\id, U(k)) \) \\
&=& \norm{\id - M(\id, U(k))}\sub{HS}  \leq  \di\(\id , M(\id, U(k))\) = \half \, \di( \id , U(k)) \\
&\leq& \norm{\id - U(k)}\sub{HS} = \dist\(\Phi'\sub{sm}(k), \Theta \,\Phi' \sub{sm}(- k)\) < \epsi.  
\end{eqnarray*}


This concludes the proof of the Proposition.  
\end{proof}

\newpage

\bigskip \bigskip

\bigskip \bigskip


{\footnotesize  

\begin{tabular}{ll}
(D. Fiorenza) & \textsc{Dipartimento di Matematica,   \virg{La Sapienza} Universit\`{a} di Roma} \\
 &  Piazzale Aldo Moro 2, 00185 Rome, Italy \\
 &  {E-mail address}: \href{mailto:fiorenza@mat.uniroma1.it}{\texttt{fiorenza@mat.uniroma1.it}} \\
 \\
(D. Monaco) &  \textsc{SISSA  International School for Advanced Studies}\\
&Via Bonomea 265, 34136 Trieste, Italy \\
&{E-mail address}: \href{mailto:dmonaco@sissa.it}{\texttt{dmonaco@sissa.it}} \\
\\
(G. Panati) & \textsc{Dipartimento di Matematica, \virg{La Sapienza} Universit\`{a} di Roma} \\
 &  Piazzale Aldo Moro 2, 00185 Rome, Italy \\
 &  {E-mail addresses}: \href{mailto:panati@mat.uniroma1.it}{\texttt{panati@mat.uniroma1.it}}, 
 \href{mailto:panati@sissa.it}{\texttt{panati@sissa.it}}  \\

\end{tabular}

\bigskip

}
\end{document}